\pgfplotsset{compat=1.5.1}
\newcommand{\cube}[1]{\mathbb{F}_2^{#1}}
\newcommand{\R}{\mathbb{R}}
\newcommand{\N}{\mathbb{N}}
\newcommand{\1}{\mathbbm{1}}
\newcommand{\ve}{\varepsilon}
\newcommand{\Forb}[3]{\textup{ForbConfig}(#1,#2,#3)}
\newcommand{\ForbLin}[3]{\textup{ForbConfig}_{\textup{Lin}}(#1,#2,#3)}
\newcommand{\Valid}[3]{\textup{AllowedConfig}(#1,#2,#3)}
\newcommand{\ValidLin}[3]{\textup{AllowedConfig}_{\textup{Lin}}(#1,#2,#3)}
\newcommand{\Delsarte}[3]{\textup{Delsarte}(#1,#2,#3)}
\newcommand{\DelsarteLin}[3]{\textup{Delsarte}_{\textup{Lin}}(#1,#2,#3)}
\newcommand{\Lin}{\textup{Lin}}
\newtheorem{proposition}{Proposition}
\newtheorem{corollary}{Corollary}
\newtheorem{remark}{Remark}
\newtheorem{prob}{Problem}
\newtheorem{fact}{Fact}
\newcommand{\blocktheorem}[1]{%
  \csletcs{old#1}{#1}
  \csletcs{endold#1}{end#1}
  \RenewDocumentEnvironment{#1}{o}
    {\par\addvspace{1.5ex}
     \noindent\begin{minipage}{\textwidth}
     \IfNoValueTF{##1}
       {\csuse{old#1}}
       {\csuse{old#1}[##1]}}
    {\csuse{endold#1}
     \end{minipage}
     \par\addvspace{1.5ex}}
}
\title{Linear Programming Hierarchies in Coding Theory:\\
Dual Solutions}
\author{
    Elyassaf Loyfer\thanks{School of Computer Science and Engineering, Hebrew University, 91904 Jerusalem, Israel. Supported in part by grant 659/18 “High-dimensional combinatorics” of the Israel Science Foundation.}
    \and
    Nati Linial\footnotemark[1]
}
\begin{document}

\maketitle

\begin{abstract}
The rate vs.\ distance problem is a long-standing open
problem in coding theory. Recent papers have suggested
a new way to tackle this problem by appealing to a new 
hierarchy of linear programs. If one can find good dual solutions
to these LP's, this would result in improved 
upper bounds for the rate vs.\ distance problem
of linear codes.
In this work we develop the first dual feasible 
solutions to the LP's in this hierarchy. These match
the best known bound for a wide range of parameters.
Our hope is that this is 
a first step towards better solutions, and
improved upper bounds for the rate vs.\ distance problem
of linear codes.
\end{abstract}

\section{Introduction}



The {\em rate vs.\ distance problem} is a major open problem in coding theory.
It seeks the largest cardinality $A(n,d)$ 
of a binary code of length $n$ with minimal distance $d$.
Here we are interested in the range $d=\Theta(n)$ and $n\to\infty$.
In this case, $A(n,d)$ is known to
grow exponentially in $n$, and
we consider 
the {\em asymptotic maximal rate},
\[
    \mathcal{R}(\delta) 
    \coloneqq
    \limsup_{n\to\infty}
    \frac{1}{n}
    \log_2 \left( A(n,\lfloor \delta n \rfloor) \right)
\]
where $0<\delta<1/2$ is the {\em relative distance} of the code.

The best known lower bound $\mathcal{R}(\delta)\ge 1-H(\delta)$
was given
by Gilbert \cite{gilbert1952comparison} for general codes and
by Varshamov \cite{varshamov1957estimate} for linear codes,
where $H$ is the binary entropy function.


The best known upper bounds are {\em the first and second linear programming
(LP) bounds} \cite{mceliece1977new}, both of which are based on 
Delsarte's linear program \cite{delsarte1973algebraic}.
The first LP bound $$\mathcal{R}(\delta)\leq H(1/2-\sqrt{\delta(1-\delta)})$$ is the best known upper bound
for $0.273<\delta<1/2$. Much of what we do here revolves around this bound. 
The exact value of Delsarte's LP remains unknown. However,
there is strong numerical evidence \cite{barg1999numerical}
that the MRRW \cite{mceliece1977new} bound has fully exhausted its potential to upper bound $\mathcal{R}(\delta)$.

A code $C\subset\cube{n}$ is {\em linear} if it is a linear subspace.
This is, of course a very strong restriction, so it stands to reason
that one should be able to derive stricter upper bounds that are specific to 
linear codes.
We denote by $A_\Lin(n,d)$ and $\mathcal{R}_{\Lin}(\delta)$
the analogues of $A(n,d)$ and $\mathcal{R}(\delta)$ when restricted to
linear codes.
Recent works \cite{coregliano2021complete,loyfer2022new} are opening
the way to linear programs stronger than Delsarte's that hopefully improve the
upper bound {\em for linear codes}.

Coregliano et.\ al.\
\cite{coregliano2021complete} developed a new hierarchy of linear 
programs whose $\ell$-th member
upper-bounds $A_{\Lin}(n,d)^{\ell}$,
and converges to this quantity when $\ell = \Omega(n^2)$.
The novel idea behind the new hierarchy is to consider the
Cartesian product of $\ell$ copies of a code. 
This way, the linearity property of the code can be
utilized in addition to Delsarte's
constraints.



In our previous work \cite{loyfer2022new}
we employ related ideas to develop a
hierarchy which is stricter than that of \cite{coregliano2021complete}.
We also suggest
a different objective function
that bounds $A_{\Lin}(n,d)$ instead of $A_{\Lin}(n,d)^{\ell}$.
In the present paper we consider both objective functions.


The LP hierarchies are extremely interesting as they may 
lead to progress in the longstanding problem of
bounding $\mathcal{R}_{\Lin}(\delta)$.
One natural course of action in this direction
is to find good dual feasible solutions,
the same way that the LP bounds were proven based on Delarte's LP.
It is challenging to find dual feasible solutions for the LP hierarchies.
Due to higher dimensionality, and increased complexity of the LP's.
It is even far from trivial 
to find dual feasible solutions
for the hierarchy which attains the first LP bound.
In this work we make a first step in this direction.

\subsection{Our Contribution}

\begin{enumerate}
    \item\label{first_point}
    We construct a family of dual feasible solutions
    for the LP hierarchy, which attain the first LP bound
    up to $\ell \leq \log n- \log\log n$, where $\ell$
    is the level in the hierarchy.
    These solutions 
    apply for both linear and non-linear codes.
    
    It is natural to ask how to apply this method to linear codes,
    and we provide a partial answer to this question.

    \item We consider the alternative objective function,
    which bounds $A_{\Lin}(n,d)$ instead of $A_{\Lin}(n,d)^\ell$,
    and construct a family of feasible solutions.
    
    In contrast with the solution alluded to in point \ref{first_point}
    these solutions apply to {\em all} values of $\ell$. Also, while both
    approaches rely on solutions to Delsarte's LP, this one treats these
    solutions as black boxes.

\end{enumerate}

\subsection{Outline of the Paper}

In section \ref{section:preliminaries} we provide preliminary
material, including the relevent LP hierarchies (\ref{section:hierarchies_overview}), and
a dual feasible solution to Delsarte's LP which
establishes the first LP bound (\ref{section:first_lp_bound}).
In section \ref{section:dual_sol_ell} we construct a 
dual feasible solution to the LP hierarchy for general codes,
and discuss how similar techniques can be applied
to linear codes.
In section \ref{section:dual_sol_lin} we provide a
dual feasible solution to the LP hierarchy
for linear codes, with an objective function that
is linear in $A_{\Lin}(n,d)$. We close with
some concluding remarks in section \ref{section:discussion}.

Proofs are deferred to the end of the paper, in appendix
\ref{section:proofs}.

    
    



\section{Preliminaries}
\label{section:preliminaries}

A binary code of length $n$ is a subset $C\subset\cube{n}$. Throughout
we only discuss binary codes.
We denote by $|x|\coloneqq |\{1\leq i \leq n: x_i\neq 0\}|$ the Hamming weight 
of $x\in\cube{n}$. The Hamming distance between $x\in\cube{n}$ and $y\in\cube{n}$
is $|x+y|$. The code's distance is $dist(C)\coloneqq \min \{|x+y|: x,y\in C,~x\neq y\}$. 
The largest possible size of a binary code of length $n$ and
distance $d$ is denoted 
\[
A(n,d) \coloneqq \max
\left\{
|C| : C\subset\cube{n},~ dist(C)\geq d
\right\}.
\]
A code is {\em linear} if it is a linear subspace.
For linear codes, this size is denoted $A_{\Lin}(n,d)$. 

The rate of a code is $R(C)\coloneqq \frac{1}{n}\log_2|C|$. The
rate vs. distance problem is to find
\[
    \mathcal{R}(\delta) 
    \coloneqq
    \limsup_{n\to\infty}
    \frac{1}{n}
    \log_2 \left( A(n,\lfloor \delta n \rfloor) \right)
\]
for every $\delta\in(0,1/2)$.

Let $f,g:\cube{n}\to\R$. We define inner product w.r.t.\
the uniform measure, $\langle f,g\rangle \coloneqq 2^{-n}\sum_{x\in\cube{n}} f(x)g(x)$. The convolution between $f,g$ is
denoted $f*g$ and defined by
$(f*g)(x) \coloneqq 2^{-n}\sum_y f(y)g(x+y)$.

The Fourier transform 
of $f$ is denoted either $\mathcal{F}(f)$ or $\widehat{f}$ and defined by
$\widehat{f}(x) \coloneqq \langle f,\chi_x\rangle$, where $\chi_x(y) = (-1)^{\langle x, y \rangle}$. Fourier transform is its own inverse,
up to normalization: $2^n \mathcal{F}(\mathcal{F}(f))= f$.
In Fourier domain, inner product and convolution are without normalization, namely
$\langle \widehat{f},\widehat{g}\rangle_{\mathcal{F}} =
\sum_{x}\widehat{f}(x)\widehat{g}(x)$
and $(\widehat{f}*_{\mathcal{F}} \widehat{g}) (x)
= \sum_{y}\widehat{f}(y) \widehat{g}(x+y)$. In favor of
readability we omit the subscript $\mathcal{F}$ when possible.

By the Convolution Theorem, 
$\widehat{f*g} = \widehat{f}\cdot \widehat{g}$.
Similarly, $\widehat{f\cdot g}=
\widehat{f}*_{\mathcal{F}}\widehat{g}$.

Let $\ell \in \N$. We identify the space $\cube{\ell n}$ with
the spaces $\cube{\ell \times n}$ and $\left(\cube{n}\right)^{\ell}$.
Given $X \in \cube{\ell \times n}$ we denote its rows by
$x_1,\dots,x_\ell$. Given a function $f:\cube{\ell n}\to \R$,
we sometimes write $f(X)$ and other times $f(x_1,\dots,x_\ell)$,
both have the same meaning.

\subsection{Krawtchouk Polynomials}
\label{section:karwtchouk}

The (univariate) Krawtchouk polynomials
$\{K_0\equiv 1,K_1,\dots,K_n\}$
are a family of
orthogonal polynomials w.r.t.\ binomial measure,
\[
    \sum_{i=0}^{n} 2^{-n}\binom{n}{i} K_j(i) K_k(i)
    = \binom{n}{j} \delta_{j,k}
\]
The Krawtchouks are uniquely determined up to normalization.
Here we assume the normalization $K_{i}(0) = \binom{n}{i}$,
for $i=0,\dots,n$.
The Krawtchouks are defined over $\R$ but we extend their definition
to the cube, writing $K_i(x) \coloneqq K_i(|x|)$,
for $x\in\cube{n}$.

The Fourier transform of the $i$-th
Krawtchouk polynomial is the $i$-th {\em level-set indicator} $L_i$,
\[
    L_i(x) \coloneqq \1_{|x|=i},\quad
    \widehat{K}_i(x) = L_i(x)
\]

\subsection{Overview of LP Hierarchies}

\label{section:hierarchies_overview}

We describe the LP hierarchies related to the current work, without
proofs. More details can be found
in \cite{coregliano2021complete,loyfer2022new}.

All of the hierarchies are parameterized by three positive integers:
$n$ - the code's length; 
$d$ - the code's distance; and
$\ell$ - the level in the hierarchy.
Every LP in the hierarchy can be symmetrized and converted to an
equivalent LP with multivariate Krawtchouk polynomials. For convenience
we use Fourier-theoretic terminology.


\subsubsection{LP Hierarchy for General Codes}
Define the set of {\em forbidden configurations} as in
\cite{coregliano2021complete},
\[
    \Forb{n}{d}{\ell} 
    \coloneqq
    \left\{
    (x_1,\dots,x_\ell)\in \left(\cube{n}\right)^{\ell}
    :
    1\leq|x_i|\leq d-1 ~\text{for some}~ i=1,\dots,\ell
    \right\}
\]

Denote the following LP by $\textup{Delsarte}(n,d,\ell)$. Its
optimum is an upper bound on $A(n,d)^{\ell}$.
\begin{align*}
    & \textup{maximize} && \sum_{X\in\cube{\ell\times n}} f(X)
    \\
    & \textup{subject to} && f:\cube{\ell\times n} \to\R
    \\
    &&& f(0) = 1
    \\
    &&& f \geq 0 
    \\
    &&& \widehat{f} \geq 0
    \\
    &&& f(X) = 0 & \text{if}~ X\in \Forb{n}{d}{\ell}
\end{align*}
We note that this hierarchy is degenerate, namely
$\textup{Delsarte}(n,d,\ell) = \textup{Delsarte}(n,d,\ell+1)$
for every $\ell$. The equality is between the optimal values.

\subsubsection{LP Hierarchies for Linear Codes}
Define the set of forbidden configurations for linear codes,
\[
    \ForbLin{n}{d}{\ell} 
    \coloneqq
    \left\{
    X \in \cube{\ell \times n} :
    1\leq|x|\leq d-1
    ~\text{for some}~ x \in rowspan(X)
    \right\}
\]

Denote the following LP by $\textup{Delsarte}_{\textup{Lin}}(n,d,\ell)$.
Its optimum is an upper bound on $A_{Lin}(n,d)^{\ell}$.
\begin{align*}
    & \textup{maximize} && \sum_{X\in\cube{\ell\times n}} f(X)
    \\
    & \textup{subject to} && f:\cube{\ell\times n} \to\R
    \\
    &&& f(0) = 1
    \\
    &&& f \geq 0 
    \\
    &&& \widehat{f} \geq 0
    \\
    &&& f(X) = 0 & \text{if}~ X\in \ForbLin{n}{d}{\ell}
\end{align*}
Note that 
$\textup{Delsarte}(n,d,\ell)$ and
$\textup{Delsarte}_{\textup{Lin}}(n,d,\ell)$ differ only in their sets of
forbidden configurations, (which is larger in the linear case).

The final LP hierarchy that we consider has the same
set of constraints as
$\textup{Delsarte}_{\textup{Lin}}(n,d,\ell)$, 
but a different objective function. Its optimum is
an upper bound to $A_{Lin}(n,d)$, rather than
$A_{Lin}(n,d)^{\ell}$. Namely, it is related linearly
to the code's size.
\begin{align*}
    & \textup{maximize} && \frac{1}{2^\ell-1}
    \sum_{0\neq u\in\cube{\ell}}
    \sum_{x\in\cube{n}} f(ux^\top)
    \\
    & \textup{subject to} && f\in \textup{Delsarte}_{\textup{Lin}}(n,d,\ell)
\end{align*}
Here $ f\in \textup{Delsarte}_{\textup{Lin}}(n,d,\ell)$
means that $f$ is feasible for
$\textup{Delsarte}_{\textup{Lin}}(n,d,\ell)$.
\subsection{The First LP Bound}

\label{section:first_lp_bound}

The first LP bound is obtained by constructing a
dual feasible solution to Delsarte's LP.
In this section we present such a construction, which will
be used in the subsequent section.

The dual of Delsarte's LP, for binary codes of length $n$ and distance
$d$, can be presented as follows.
\begin{proposition}
    \label{prop:delsarte_dual}
    $A(n,d)$ is upper bounded by
    \begin{align}
        & \textup{minimize} && g(0)/\widehat{g}(0)
        \label{eq:delsarte_dual1}
        \\
        & \textup{subject to} && g:\cube{n}\to\R
        \label{eq:delsarte_dual2}
        \\
        &&& \widehat{g} \geq 0
        \label{eq:delsarte_dual3}
        \\
        &&& \widehat{g}(0) > 0
        \label{eq:delsarte_dual4}
        \\
        &&& g(x) \leq 0 && \text{if}~ |x|\geq d
        \label{eq:delsarte_dual5}
    \end{align}
\end{proposition}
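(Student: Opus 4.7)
The plan is to exhibit, for any code $C\subset\cube{n}$ with $dist(C)\geq d$ and any $g$ feasible for the stated LP, the inequality $|C|\le g(0)/\widehat{g}(0)$. The natural vehicle is the double sum $T\coloneqq \sum_{a,b\in C} g(a+b)$, which I would sandwich between two expressions that involve only the data of the LP.

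For the upper bound on $T$, I would split the sum according to whether $a=b$ or not. The diagonal contribution is exactly $|C|\,g(0)$. For off-diagonal pairs, the minimum distance assumption gives $|a+b|\ge d$, so each of these terms is $\le 0$ by constraint \eqref{eq:delsarte_dual5}. Hence $T\leq |C|\,g(0)$.

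For the lower bound, I would write $T = \sum_x N(x)\,g(x)$, where $N(x) \coloneqq |\{(a,b)\in C^2: a+b=x\}|$ is (up to the $2^{-n}$ in the paper's normalization) the autoconvolution $\1_C*\1_C$. By the convolution theorem, $\widehat{N} = 2^n\,\widehat{\1_C}^{\,2}\ge 0$ pointwise, and $\widehat{N}(0) = 2^{-n}\sum_x N(x) = |C|^2/2^n$. Applying Parseval in the paper's normalization yields
\[
T \;=\; 2^n \langle N, g\rangle \;=\; 2^n\sum_y \widehat{N}(y)\,\widehat{g}(y) \;\ge\; 2^n\,\widehat{N}(0)\,\widehat{g}(0) \;=\; |C|^2\,\widehat{g}(0),
\]
where the inequality uses $\widehat{N}\ge 0$ together with constraint \eqref{eq:delsarte_dual3}. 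Combining the two bounds and dividing by $|C|\widehat{g}(0)>0$ (using \eqref{eq:delsarte_dual4}) gives $|C|\le g(0)/\widehat{g}(0)$; taking $C$ to be an optimal code and $g$ an optimal dual solution completes the argument.

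There is no genuine obstacle here beyond bookkeeping: the only thing that requires care is tracking the $2^{\pm n}$ factors that arise because the paper uses the uniform-measure inner product and the normalized convolution, so that Parseval relates $2^{-n}\sum fg$ to $\sum \widehat{f}\widehat{g}$ (without a $2^n$). Getting these constants right is what makes the final cancellation produce exactly the ratio $g(0)/\widehat{g}(0)$ rather than an off-by-$2^n$ bound.
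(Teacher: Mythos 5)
Your proof is correct, and it is self-contained in a way the paper's is not. The paper proves this proposition by specializing Proposition \ref{prop:dual_ell_general} (with $\ell=1$), whose proof is a weak-duality calculation against Delsarte's \emph{primal} LP: it takes as an input the fact (stated in the preliminaries but not proved there) that any primal-feasible $f$ satisfies $A(n,d)\le\sum_X f(X)$, and then shows $\widehat g(0)\sum_X f(X)\le g(0)$ via Parseval and sign considerations. You instead go directly from a code $C$ of minimum distance $d$, sandwiching $T=\sum_{a,b\in C}g(a+b)$: the diagonal/off-diagonal split and constraint \eqref{eq:delsarte_dual5} give $T\le|C|g(0)$, while writing $T$ in terms of $N=2^n(\1_C*\1_C)$, using $\widehat N=2^n\widehat{\1}_C^{\,2}\ge 0$ and $\widehat N(0)=|C|^2/2^n$, and dropping all but the $y=0$ Fourier term gives $T\ge|C|^2\widehat g(0)$. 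Unwinding your argument, you are in effect exhibiting the canonical primal-feasible witness $f=N/|C|$ (which has $f(0)=1$, $f\ge 0$, $\widehat f\ge 0$, $f\equiv 0$ on $1\le|x|\le d-1$, and $\sum f=|C|$) and then running the same weak-duality computation, so the analytic engine — Parseval plus the sign constraints — is identical. The difference is that the paper modularizes and relies on the known primal bound, while your version proves the bound from scratch; the paper's route is what scales to the $\ell>1$ case, where the direct autocorrelation argument is less transparent. One small wording nit: the conclusion should read that $A(n,d)\le g(0)/\widehat g(0)$ for \emph{every} feasible $g$ (hence $A(n,d)$ is at most the infimum); there is no need to invoke an "optimal dual solution," and one should not, since with the strict constraint \eqref{eq:delsarte_dual4} and the ratio objective the infimum need not a priori be attained.
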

To turn it into an LP, we can further posit that $\widehat{g}(0)=1$.

All of the solutions to this dual LP,
given in 
\cite{mceliece1977new,navon2005delsarte,navon2009linear,samorodnitsky2021one,barg2006spectral,barg2008functional},
have the form
\begin{equation}
    \label{eq:g_universal_form}
    g(x) = (t-|x|)\cdot\Lambda^2(x)
\end{equation}
where $t\leq d$, and $\Lambda$ is chosen appropriately. This guarantees 
that constraint \eqref{eq:delsarte_dual5} is satisfied,
and it only remains to find $\Lambda$ that satisfies the
Fourier constraints, \eqref{eq:delsarte_dual3} and \eqref{eq:delsarte_dual4}. The linearity of the function 
$x\mapsto (t-|x|)$ simplifies this task.

The above-mentioned solutions also share the same $\Lambda$,
with slight differences. But the different methods used to construct
this $\Lambda$ shed new light over the approach
given in \eqref{eq:g_universal_form} 
which originated in
\cite{mceliece1977new}. 
As we explain shortly,
the function 
$x \mapsto |x|$
is related to the adjacency matrix
of the Hamming cube.  
Also, a good choice for $\Lambda$ is the first eigenfunction
of the smallest Hamming ball
which satisfies a certain constraint.

To see this connection, note that $2(t-|x|) = K_1(x)-K_1(t)$, where
$K_1(t)=n-2t$ is the first Krawtchouk polynomial. The Fourier
transform of $K_1$ is $L_1$, the indicator function of the set
$\{x\in\cube{n}: |x|=1\}$. Consider the operator of convolution with $L_1$.
The matrix of this operator 
is the  $2^n\times 2^n$ matrix $A$, the adjacency matrix of Hamming cube.
Namely, for any $x, y\in \cube{n}$
\begin{equation}
    \label{eq:A_def}
    A_{x,y} =
    \begin{cases}
        1 & |x+ y| = 1 \\
        0 & \text{otherwise}
    \end{cases}
\end{equation}
We include the simple proof: let $f:\cube{n}\to\R$,
\begin{equation}
    \label{eq:proof_K1_fourier}
    2^n(L_1*f)(x) = \sum_{y\in\cube{n}} L_1(y) f(x+y)
    = \sum_{i=1}^{n} f(x+e_i)
    = \sum_{y: |y+x|=1} f(y)
    = (Af)(x)
\end{equation}
All papers \cite{mceliece1977new,navon2005delsarte,navon2009linear,samorodnitsky2021one,barg2006spectral,barg2008functional}
find an
appropriate $\Lambda$ to establish the first LP bound. Of all these papers
our approach is closest to that of \cite{navon2005delsarte}.

\begin{proposition}
    \label{prop:lambda_sufficient_conditions}
    Let $\ve > 0$.
    Let $\Lambda:\cube{n}\to\R$ such that
    \[
    \textup{(a)}~ \widehat{\Lambda}(0) = 1;
    \qquad
    \textup{(b)}~ \widehat{\Lambda} \geq 0;
    \qquad
    \textup{(c)}~ A\cdot \widehat{\Lambda} \geq (n-2d+2\ve) \widehat{\Lambda};
    \]
    Then, $g(x)\coloneqq 2(d-|x|)\Lambda^2(x)$ is a feasible
    solution to Delsarte's dual LP, and 
    \[
        \frac{g(0)}{\widehat{g}(0)} \leq \frac{d}{\ve} \left|supp(\widehat{\Lambda})\right|
    \]
\end{proposition}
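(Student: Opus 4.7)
The plan is to verify the three dual-feasibility constraints for $g(x) = 2(d-|x|)\Lambda^2(x)$ and then estimate the ratio $g(0)/\widehat{g}(0)$. The sign constraint \eqref{eq:delsarte_dual5} is immediate, since $(d - |x|) \leq 0$ for $|x| \geq d$ while $\Lambda^2 \geq 0$. For the Fourier-side constraints, I would use the identity $2(d-|x|) = K_1(x) - K_1(d)$ (noted in the paper) to write $g = (K_1 - K_1(d))\,\Lambda^2$, and then apply the Convolution Theorem in the form $\widehat{f\cdot h} = \widehat{f}*\widehat{h}$, together with $\widehat{K_1}=L_1$ and the fact that the Fourier transform of the constant $c$ is $c\cdot\1_{x=0}$, to obtain
\[
    \widehat{g} \;=\; A\,\widehat{\Lambda^2} \;-\; (n-2d)\,\widehat{\Lambda^2},
\]
where I use the identification of $L_1$-convolution with the adjacency operator $A$ from \eqref{eq:proof_K1_fourier}, and $K_1(d) = n-2d$.

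The key step---which I expect to be the main difficulty if one proceeds naively---is transferring the spectral hypothesis (c), which is stated on $\widehat{\Lambda}$, to an analogous statement for $\widehat{\Lambda^2} = \widehat{\Lambda}*\widehat{\Lambda}$. The resolution is that $A$ is itself Fourier convolution with $L_1$, so it commutes with convolution by $\widehat{\Lambda}$, giving $A\widehat{\Lambda^2} = (A\widehat{\Lambda})*\widehat{\Lambda}$. Since $\widehat{\Lambda}\geq 0$ by (b), convolving the pointwise inequality (c) with $\widehat{\Lambda}$ preserves its direction and yields $A\widehat{\Lambda^2} \geq (n-2d+2\ve)\widehat{\Lambda^2}$, hence
\[
    \widehat{g} \;\geq\; 2\ve\,\widehat{\Lambda^2} \;\geq\; 0,
\]
which establishes \eqref{eq:delsarte_dual3}. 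Evaluating at $0$ then gives $\widehat{g}(0) \geq 2\ve\sum_y \widehat{\Lambda}(y)^2 \geq 2\ve\,\widehat{\Lambda}(0)^2 = 2\ve > 0$, which is \eqref{eq:delsarte_dual4}.

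For the ratio, Fourier inversion gives $\Lambda(0) = \sum_y \widehat{\Lambda}(y)$, and applying Cauchy--Schwarz on the support of $\widehat{\Lambda}$ yields
\[
    \Lambda(0)^2 \;\leq\; |supp(\widehat{\Lambda})| \cdot \sum_y \widehat{\Lambda}(y)^2.
\]
Since $g(0) = 2d\,\Lambda(0)^2$, combining this with the lower bound $\widehat{g}(0) \geq 2\ve\sum_y \widehat{\Lambda}(y)^2$ gives
\[
    \frac{g(0)}{\widehat{g}(0)} \;\leq\; \frac{2d\,|supp(\widehat{\Lambda})|\,\sum_y \widehat{\Lambda}(y)^2}{2\ve\,\sum_y \widehat{\Lambda}(y)^2} \;=\; \frac{d}{\ve}\,\big|supp(\widehat{\Lambda})\big|,
\]
as claimed.
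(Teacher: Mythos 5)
Your proof is correct and takes essentially the same route as the paper: both use the Convolution Theorem to obtain $\widehat{g}=\bigl((A-(n-2d)I)\widehat{\Lambda}\bigr)*\widehat{\Lambda}\geq 2\ve\,\widehat{\Lambda}*\widehat{\Lambda}$, then apply Cauchy--Schwarz on the support of $\widehat{\Lambda}$. The only cosmetic difference is that you make explicit the commutation $A\widehat{\Lambda^2}=(A\widehat{\Lambda})*\widehat{\Lambda}$ when transferring hypothesis (c) from $\widehat{\Lambda}$ to $\widehat{\Lambda^2}$, whereas the paper groups the convolution factors so this step is implicit.
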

\begin{proposition}
    \label{prop:lambda_existence}
    There exists a function $\Lambda = \Lambda_{d,\ve}$
    which satisfies proposition
    \ref{prop:lambda_sufficient_conditions}, and its
    Fourier transform,
    $\widehat{\Lambda}$,
    is supported
    on the Hamming ball of radius 
    $r = n/2-\sqrt{d(n-d)})+o(n)$.
\end{proposition}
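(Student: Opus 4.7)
The plan is to construct $\widehat{\Lambda}$ as the (extended) Perron--Frobenius eigenfunction of the adjacency matrix $A$ of the Hamming cube, restricted to a Hamming ball of appropriately chosen radius, following the approach of \cite{navon2005delsarte}. This reduces the task to a well-studied eigenvalue estimate for the induced Hamming-ball subgraph, so essentially all of the work is isolated into one classical spectral inequality.

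Concretely, let $r$ be a radius to be determined, and let $A|_{B_r}$ denote the principal submatrix of $A$ indexed by $B_r = \{x\in\cube{n}: |x|\leq r\}$. Since the induced subgraph on $B_r$ is connected, Perron--Frobenius yields a strictly positive eigenfunction $\phi: B_r\to\R_{>0}$ corresponding to $\lambda_r \coloneqq \lambda_{\max}(A|_{B_r})$. I would normalize $\phi(0)=1$, extend $\phi$ by zero outside $B_r$, and set $\widehat{\Lambda}\coloneqq\phi$. Conditions (a) and (b) of Proposition~\ref{prop:lambda_sufficient_conditions} then hold by construction, and (c) follows from a short sign argument: since $\phi\geq 0$ everywhere, for $x\in B_r$
\[
    (A\phi)(x) = \sum_{y:|x+y|=1}\phi(y) \;\geq\; \sum_{y\in B_r:|x+y|=1}\phi(y) = (A|_{B_r}\phi)(x) = \lambda_r \phi(x),
\]
while for $x\notin B_r$ both sides are $\geq 0 = \lambda_r\phi(x)$. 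Hence $A\widehat{\Lambda}\geq \lambda_r\widehat{\Lambda}$ pointwise, so it is enough to pick $r$ with $\lambda_r \geq n-2d+2\ve$.

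The remaining, and \emph{main}, step is the spectral estimate
\[
    \lambda_{\max}(A|_{B_r}) = 2\sqrt{r(n-r)} + o(n),
\]
a classical Friedman--Tillich-type bound on the top eigenvalue of the induced Hamming-ball subgraph. Given this, solving $2\sqrt{r(n-r)}\geq n-2d+2\ve$ forces $r = n/2 - \sqrt{d(n-d)} + o(n)$, with the $o(n)$ correction absorbing both the $\ve$ slack and the error term in the spectral estimate. The nontrivial direction is the lower bound on $\lambda_r$, which I would not re-derive: it is typically obtained by computing the Rayleigh quotient on a tensor-product test function supported on $B_r$, along the lines of \cite{navon2005delsarte,samorodnitsky2021one}, and is the only place where genuine work beyond bookkeeping is needed.
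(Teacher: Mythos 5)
Your proposal is correct and follows the route that the paper itself sketches just before the proposition (``We choose $\Lambda$ such that $\widehat{\Lambda}$ is the Perron eigenfunction of $A^{\leq r}$\dots'') and again in the remark at the end of the proof. The paper's \emph{written} proof, however, takes the Krawtchouk-theoretic form of the same idea: it sets $\widehat{\Lambda}(x) = \sum_{i=0}^{r}\binom{n}{i}^{-1}K_i(d-\ve)L_i(x)$ (a Christoffel--Darboux kernel), chooses $r$ via the location of Krawtchouk zeros, and verifies condition (c) by the Christoffel--Darboux identity rather than by your sign-truncation inequality $(A\phi)(x)\geq(A|_{B_r}\phi)(x)$. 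The two are essentially interchangeable: the spectral radius of $A^{\leq r}$ is $n-2z_{1,r+1}$, so your eigenvalue estimate $\lambda_{\max}(A|_{B_r})=2\sqrt{r(n-r)}+o(n)$ is precisely the paper's Fact \ref{fact:krawtchouk_zero_asymptotic} translated from zeros to eigenvalues. The practical difference is that the paper's version is self-contained given the Krawtchouk facts it lists (and permits evaluating the kernel at the generic point $d-\ve$ rather than exactly at $z_{1,r+1}$), whereas your version pushes all the analytic work into a cited eigenvalue bound, which you correctly flag as the only nontrivial step. Your sign argument for (c) and the normalization for (a)--(b) are sound, and the radius computation from $2\sqrt{r(n-r)}\geq n-2d+2\ve$ matches the stated $r = n/2-\sqrt{d(n-d)}+o(n)$.
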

\begin{corollary}[The First LP Bound]
    \label{cor:first_lp_bound}
    \[
        \mathcal{R}(\delta)\leq H(1/2+\sqrt{\delta(1-\delta)})
    \]
\end{corollary}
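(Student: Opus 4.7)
The plan is to combine Propositions~\ref{prop:lambda_sufficient_conditions} and~\ref{prop:lambda_existence} directly: use the existence statement to produce a $\Lambda$ whose Fourier support is a small Hamming ball, feed it into the upper bound of Proposition~\ref{prop:lambda_sufficient_conditions}, and then translate the resulting estimate on $A(n, \lfloor\delta n\rfloor)$ into an asymptotic rate bound.

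More concretely, fix $\delta\in(0,1/2)$ and set $d=\lfloor\delta n\rfloor$. Choose $\ve>0$ to be any sub-exponential function of $n$; for concreteness one may take $\ve=1$, or even $\ve=1/\mathrm{poly}(n)$, so that the prefactor $d/\ve$ in Proposition~\ref{prop:lambda_sufficient_conditions} is only polynomial in $n$. Proposition~\ref{prop:lambda_existence} then furnishes a function $\Lambda=\Lambda_{d,\ve}$ satisfying conditions (a)--(c) of Proposition~\ref{prop:lambda_sufficient_conditions}, with $\widehat{\Lambda}$ supported inside the Hamming ball of radius
\[
    r \;=\; \tfrac{n}{2} - \sqrt{d(n-d)} + o(n)
    \;=\; n\!\left(\tfrac{1}{2} - \sqrt{\delta(1-\delta)}\right) + o(n).
\]
Proposition~\ref{prop:lambda_sufficient_conditions} then yields a feasible dual solution $g(x)=2(d-|x|)\Lambda^2(x)$ with
\[
    A(n,d) \;\leq\; \frac{g(0)}{\widehat{g}(0)} \;\leq\; \frac{d}{\ve}\,\bigl|\mathrm{supp}(\widehat{\Lambda})\bigr|.
\]

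Next I would bound the size of the support by the volume of the Hamming ball. Since $r/n\to \tfrac{1}{2}-\sqrt{\delta(1-\delta)}<\tfrac12$ as $n\to\infty$, the standard entropy estimate gives
\[
    \bigl|\mathrm{supp}(\widehat{\Lambda})\bigr| \;\leq\; \sum_{i=0}^{r}\binom{n}{i} \;\leq\; 2^{\,n\,H(r/n)} \;=\; 2^{\,n\bigl(H(1/2-\sqrt{\delta(1-\delta)})+o(1)\bigr)},
\]
using continuity of the binary entropy function $H$ to absorb the $o(n)$ error in $r$ into an $o(1)$ error in the exponent. Combined with the polynomial factor $d/\ve$, taking $\log_2$ and dividing by $n$ gives
\[
    \tfrac{1}{n}\log_2 A(n,\lfloor\delta n\rfloor) \;\leq\; H\!\bigl(\tfrac{1}{2}-\sqrt{\delta(1-\delta)}\bigr) + o(1).
\]
Taking $\limsup_{n\to\infty}$ yields $\mathcal{R}(\delta) \leq H\!\bigl(\tfrac{1}{2}-\sqrt{\delta(1-\delta)}\bigr)$, which by the symmetry $H(p)=H(1-p)$ is identical to the form $H\!\bigl(\tfrac{1}{2}+\sqrt{\delta(1-\delta)}\bigr)$ stated in the corollary.

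This is essentially a bookkeeping exercise once the two propositions are in hand, so I do not expect a real obstacle. The only points deserving care are (i) verifying that a slowly-decaying choice of $\ve$ is consistent with the existence statement of Proposition~\ref{prop:lambda_existence} and still keeps the $o(n)$ error term in $r$ harmless, and (ii) confirming that the volume-of-ball estimate applies in the regime $r/n<1/2$, which follows automatically from $\sqrt{\delta(1-\delta)}>0$ for $\delta\in(0,1/2)$.
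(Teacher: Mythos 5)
Your proof is correct and follows essentially the same route as the paper: fix $\ve=1$, combine Propositions~\ref{prop:delsarte_dual}, \ref{prop:lambda_sufficient_conditions} and \ref{prop:lambda_existence}, bound $|\mathrm{supp}(\widehat{\Lambda})|$ by the volume of the Hamming ball of radius $r=n(1/2-\sqrt{\delta(1-\delta)})+o(n)$ via the entropy estimate, and take the normalized $\limsup$. You also explicitly invoke the symmetry $H(p)=H(1-p)$ to reconcile the $1/2-\sqrt{\delta(1-\delta)}$ argument appearing in the derivation with the $1/2+\sqrt{\delta(1-\delta)}$ form in the corollary statement, a detail the paper leaves implicit.
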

Let us describe a function $\Lambda$ for 
proposition \ref{prop:lambda_existence}.
Let $A^{\leq r}$ be a submatrix of $A$ corresponding to all 
vertices $x\in\cube{n}$ of Hamming weight $\le r$.
Namely, the adjacency matrix of the Hamming
ball of radius $r$.
We choose $\Lambda$ such that $\widehat{\Lambda}$
is the Perron eigenfunction of $A^{\leq r}$,
and pick the smallest $r$ for which $A^{\leq r}$
has spectral radius at least $n-2(d-\ve)$. For more details,
see the proof of proposition \ref{prop:lambda_existence} and
the remark that follows, in appendix \ref{section:proofs}.

\section{Dual Solutions to the LP Hierarchies}

\label{section:dual_sol_ell}

In this section we construct a family of dual feasible solutions
for the LP hierarchy $\Delsarte{n}{d}{\ell}$. We also 
consider how to apply the same ideas to 
$\DelsarteLin{n}{d}{\ell}$,
and the resulting complications.

As in Delsarte's dual LP, also the duals of the hierarchies,
which we define below,
consist of two types of constraints: Fourier constraints,
and a non-positivity constraint.
Thus, we may again try to decompose $g$ into a function which guarantees
non-positivity $(d-|x|)$, and a function geared at
yielding the Fourier constraints.
However, while for $\ell=1$ a linear function is all you need for the non-positivity constraint,
this is no longer possible when $\ell$ grows.

Instead of a linear function, we construct a polynomial
$\Phi_{n,d,\ell}$ which is non-positive in the desired regions,
and seek a function $g_{n,d,\ell}:\cube{\ell\times n}\to\R$
of the form
\[
    g_{n,d,\ell} = \Phi_{n,d,\ell} \cdot \Gamma_{n,d,\ell}^2
\]
as a dual feasible solution to $\Delsarte{n}{d}{\ell}$. It turns
out that for our choice of $\Phi$, the function
$\Gamma \coloneqq \Lambda^{\otimes \ell} = \Lambda \otimes 
\dots \otimes \Lambda$ works, where $\Lambda$ 
is from proposition \ref{prop:lambda_existence}.
In other words, the solution is obtained by a reduction
from the $\ell$-th level to Delsarte.

The main shortcoming of our solution is its fast growth in $n$:
\[
    value(g_{n,d,\ell})^{1/\ell}
    \leq 
    \left(e n^{1/\delta}\right)^{\frac{2^\ell\log\ell }{\ell}}
    2^{nH\left(1/2 + \sqrt{\delta(1-\delta)}\right) + o(n)}
\]
where $\delta = d/n$.
When $\ell$ is too large,
the first term becomes dominant and the solution's value exponentially exceeds
the first LP bound.
Moreover, the hierarchy for general codes is known to be
degenerate, namely, comparing optimal values,
\[
    \left(\Delsarte{n}{d}{\ell}\right)^{1/\ell}
    =\Delsarte{n}{d}{1}
\]
which means, in particular,
that there exists a solution to the $\ell$-th level
that has the exact same value of the solution
from the previous section.

So, do the methods that we use
for general codes apply to linear codes as well?
In this case, we are able to construct an analogue of 
$\Phi$ that is suitable for linear codes, but
a solution based on $\Lambda^{\otimes \ell}$ no 
longer works. Instead, we suggest a reduction to a
problem of the same spirit of
proposition \ref{prop:lambda_sufficient_conditions}.

Throughout this section, we fix the parameters
$n,d,\ell$, and omit their subscripts, e.g.\
we write $g$ instead of $g_{n,d,\ell}$.
Also, we denote $\delta = d/n$.

\subsection{General Codes - $\Delsarte{n}{d}{\ell}$}

Let us first define the dual of $\Delsarte{n}{d}{\ell}$.
\begin{proposition}
    \label{prop:dual_ell_general}
    $A(n,d)^\ell$ is upper bounded by
    \begin{align}
        & \textup{minimize} && g(0)/\widehat{g}(0)
        \label{eq:dual_ell1}
        \\
        & \textup{subject to} && g:\cube{\ell \times n} \to \R
        \label{eq:dual_ell2}
        \\
        &&& \widehat{g} \geq 0 
        \label{eq:dual_ell3}
        \\
        &&& \widehat{g}(0) > 0
        \label{eq:dual_ell4}
        \\
        &&& g(X) \leq 0 && X\in \Valid{n}{d}{\ell}\wedge X\neq 0 
        \label{eq:dual_ell5}
    \end{align}
    where $\Valid{n}{d}{\ell}$ is the complement of the set
    of forbidden configurations,
    \[
        \Valid{n}{d}{\ell}
        \coloneqq 
        \left\{
            (x_1,\dots,x_\ell)\in\left(\cube{n}\right)^\ell:
            x_i = 0 \vee |x_i|\geq d
            \text{ for all }i=1,\dots,\ell
        \right\}
    \]
\end{proposition}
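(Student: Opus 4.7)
The plan is a direct weak-duality argument against the primal LP $\Delsarte{n}{d}{\ell}$ from Section~\ref{section:hierarchies_overview}, which is already asserted to upper bound $A(n,d)^{\ell}$. It therefore suffices to establish weak duality, namely: for every primal-feasible $f$ and every $g$ satisfying \eqref{eq:dual_ell2}--\eqref{eq:dual_ell5}, one has $\sum_X f(X)\le g(0)/\widehat{g}(0)$; the bound then transfers to any dual-feasible $g$.

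The natural object to examine is the pairing $\sum_X f(X)g(X)$, bounded in two ways. On the space side, I would split the sum by whether $X=0$, $X\in\Forb{n}{d}{\ell}$, or $X\in\Valid{n}{d}{\ell}\setminus\{0\}$. The first contributes $g(0)$ since $f(0)=1$; the second vanishes since $f\equiv 0$ on forbidden configurations by primal feasibility; and the third is nonpositive since $f(X)\ge 0$ (primal) and $g(X)\le 0$ there by \eqref{eq:dual_ell5}. Hence $\sum_X f(X)g(X)\le g(0)$.

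Next, I would bound the same pairing from below via Parseval on $\cube{\ell n}$: in the paper's normalization, $\sum_X f(X)g(X)=2^{\ell n}\sum_Y \widehat{f}(Y)\widehat{g}(Y)$. Because $\widehat{f}\ge 0$ by primal feasibility and $\widehat{g}\ge 0$ by \eqref{eq:dual_ell3}, dropping every $Y\neq 0$ term gives the lower bound $2^{\ell n}\widehat{f}(0)\widehat{g}(0) = \widehat{g}(0)\cdot\sum_X f(X)$, where the last equality uses $\widehat{f}(0) = 2^{-\ell n}\sum_X f(X)$. Combining the two estimates and dividing by the strictly positive $\widehat{g}(0)$---which is exactly where \eqref{eq:dual_ell4} is needed---yields $\sum_X f(X)\le g(0)/\widehat{g}(0)$, as required.

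The main obstacle, such as it is, is consistent bookkeeping of the Fourier normalization on $\cube{\ell n}$, since the paper adopts a mixed convention (averaged inner product on the space side, unnormalized on the Fourier side); one must verify that the factor $2^{\ell n}$ from Parseval cancels against $\widehat{f}(0) = 2^{-\ell n}\sum_X f(X)$. Beyond that, the argument is entirely parallel to the proof of Proposition~\ref{prop:delsarte_dual}, simply promoted from $\cube{n}$ to the product space $\cube{\ell\times n}$, with $\Valid{n}{d}{\ell}$ playing the role that $\{x:|x|\ge d\}\cup\{0\}$ plays in the $\ell=1$ case.
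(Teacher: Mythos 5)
Your proof is correct and is essentially identical to the paper's: the paper proves weak duality by the same chain of (in)equalities, bounding $\sum_X f(X)g(X)$ from below by $2^{\ell n}\widehat f(0)\widehat g(0)=\widehat g(0)\sum_X f(X)$ via Parseval and nonnegativity of the Fourier coefficients, and from above by $g(0)$ via the case split on $X$. The only cosmetic difference is that the paper writes this as a single displayed chain starting from $\widehat g(0)\sum_X f(X)$, while you organize it as two separate estimates on the pairing.
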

We proceed to construct a feasible solution in two steps:
\begin{enumerate}[label=(\Roman*)]
    \item\label{step1} Define a function $\Phi:\cube{\ell\times n}\to\R$ such that $\Phi(0)>0$ and $\Phi$ satisfies constraint \eqref{eq:dual_ell5}.
    \item\label{step2} Find a function $\Gamma:\cube{\ell\times n}\to\R$ such that
    \[
        g\coloneqq \Phi\cdot \Gamma^2
    \]
    is feasible, namely it satisfies
    constraints \eqref{eq:dual_ell3} and \eqref{eq:dual_ell4} (the remaining
    constraint is satisfied by construction).
\end{enumerate}
Of course we want to carry out step \ref{step1} with a function $\Phi$,
that makes step \ref{step2} possible. 



Here is the idea behind our construction of $\Phi$ (see fig. \ref{fig:phi}):
Consider a set of $2^\ell-1$ balls, 
each in one subcube $\{\cube{U\times n}\}_{U\subset[\ell]}$
of $\cube{\ell\times n}$. Pick the centers, the radii and the $\ell_p$-norm
of the balls so that if $X\in\Valid{n}{d}{\ell}$, it is contained
in an {\em odd} number of balls.
For each ball define
a function which is negative inside the ball, and positive
outside of it. Finally, $\Phi$ is the product of these functions.
If $X\in\Valid{n}{d}{\ell}$, then $\Phi(X)\leq 0$, since it is the product
of and odd number of non-positive functions, and an even
number of non-negative functions.

\begin{figure}
\centering
    \begin{subfigure}[b]{0.45\textwidth}
        \begin{tikzpicture}
        
        \begin{axis}[ 
            title={$\Valid{n}{d}{2}$},
            ticks=none,
            axis lines = middle,
            ymin=-0.5, ymax=6,
            xmin=-0.5, xmax=6,
            xlabel={$|x_1|$},
            ylabel={$|x_2|$},
            axis equal image
        ]
        \node[anchor=north west] at (axis cs:0.9,0) {$d$};
        \node[anchor=north west] at (axis cs:4.9,0) {$n$};
        \node[anchor=south east] at (axis cs:0,0.9) {$d$};
        \node[anchor=south east] at (axis cs:0,4.9) {$n$};
        \addplot[dashed, samples=2 ,gray, name path=three] coordinates {(1,-1)(1,6)};
        \addplot[dashed, samples=2 ,gray, name path=three] coordinates {(-1,1)(6,1)};
        \addplot[dashed, samples=2 ,gray, name path=three] coordinates {(5,-1)(5,6)};
        \addplot[dashed, samples=2 ,gray, name path=three] coordinates {(-1,5)(6,5)};
        \fill[red, opacity=0.3] (axis cs:1,1) rectangle (axis cs:5,5);
        \addplot[line width=6, opacity=0.3, samples=2 ,red, name path=three] coordinates {(0,1)(0,5)};
        \addplot[line width=6, opacity=0.3, samples=2 ,red, name path=three] coordinates {(1,0)(5,0)};
        \end{axis}
        \end{tikzpicture}
    \end{subfigure}
    \hfill
    \begin{subfigure}[b]{0.45\textwidth}
        \begin{tikzpicture}
        
        \tikzstyle{reverseclip}=[insert path={(current page.north east) --
          (current page.south east) --
          (current page.south west) --
          (current page.north west) --
          (current page.north east)}
        ]
        
        \begin{axis}[ 
            title={$\Phi_{n,d,2}\leq 0$},
            ticks=none,
            axis lines = middle,
            ymin=-0.5, ymax=6,
            xmin=-0.5, xmax=6,
            xlabel={$|x_1|$},
            ylabel={$|x_2|$},
            axis equal image
        ]
        \node[anchor=north east] at (axis cs:1.1,0) {$d$};
        \node[anchor=north west] at (axis cs:4.9,0) {$n$};
        \node[anchor=north east] at (axis cs:0,1.1) {$d$};
        \node[anchor=south east] at (axis cs:0,4.9) {$n$};
        \addplot[dashed, samples=2 ,gray, name path=three] coordinates {(1,-1)(1,6)};
        \addplot[dashed, samples=2 ,gray, name path=three] coordinates {(-1,1)(6,1)};
        \addplot[dashed, samples=2 ,gray, name path=three] coordinates {(5,-1)(5,6)};
        \addplot[dashed, samples=2 ,gray, name path=three] coordinates {(-1,5)(6,5)};
        \fill[red, opacity=0.3] (axis cs:1,1) rectangle (axis cs:5,5);
        \begin{scope}
            \clip (axis cs:3,3) circle (2.828) [reverseclip];
            \fill[red, opacity=0.3] (axis cs:-1,1) rectangle (axis cs:1,5);
        \end{scope}
        \begin{scope}
            \clip (axis cs:3,3) circle (2.828) [reverseclip];
            \fill[red, opacity=0.3] (axis cs:5,1) rectangle (axis cs:6,5);
        \end{scope}
        \begin{scope}
            \clip (axis cs:3,3) circle (2.828) [reverseclip];
            \fill[red, opacity=0.3] (axis cs:1,-1) rectangle (axis cs:5,1);
        \end{scope}
        \begin{scope}
            \clip (axis cs:3,3) circle (2.828) [reverseclip];
            \fill[red, opacity=0.3] (axis cs:1,5) rectangle (axis cs:5,6);
        \end{scope}
        \end{axis}
        \end{tikzpicture}
    \end{subfigure}
    
\caption{Illustration of
$\Valid{n}{d}{\ell}$ and
$\Phi_{n,d,\ell} \leq 0$, for $\ell = 2$.
The axes are the Hamming weights of $(x_1,x_2)\in\left(\cube{n}\right)^2$.
}
\label{fig:phi}
\end{figure}

Let us carry out steps \ref{step1} and \ref{step2}.

\noindent\textbf{Step \ref{step1}}.
Let $m\in\N$ be even such that $\ell \leq \left(\frac{1+\delta}{1-\delta}\right)^m$.
For every $\emptyset\neq U\subset [\ell]$, let
\[
    \phi_U(x_1,\dots,x_\ell) = \sum_{i\in U}
    \left[ (n+d-2|x_i|)^m-(n-d)^m \right]
\]

Define $\Phi=\Phi_{n,d,\ell}:\cube{\ell\times n}\to\R$:
\[
    \Phi= \prod_{\emptyset\neq U\subset [\ell]}\phi_U
\]

\noindent\textbf{Step \ref{step2}}.
Let $\ve >0$ such that $(n-d+2\ve)^m - (n-d)^m = 1$, i.e.
\[
    2\ve = ((n-d)^m+1)^{1/m} - (n-d)
\]
Let $\Lambda = \Lambda_{d,\ve}$ from proposition
\ref{prop:lambda_existence}.
Define $\Gamma \coloneqq \Lambda^{\otimes \ell}$,
the tensor product of $\ell$ copies of $\Lambda$.

The following propositions establish the main result of this section,
corollary \ref{cor:dual_ell}.

\begin{proposition}\label{prop:nonpos_poly_general}
    \hfill
    \begin{enumerate}
        \item \label{nonpos_poly_general1} $\phi_U(x_1,\dots,x_\ell) \geq 0$ if $x_i=0$ for some $i\in U$.
        \item \label{nonpos_poly_general2} $\phi_U(x_1,\dots,x_\ell) \leq 0$ if $|x_i|\geq d$ for all $i\in U$.
        \item \label{nonpos_poly_general3} $\Phi(0) > 0$.
        \item \label{nonpos_poly_general4} $\Phi(X) \leq 0$ if $X\in \Valid{n}{d}{\ell}$ and $X\neq 0$.
    \end{enumerate}
\end{proposition}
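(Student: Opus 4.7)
The plan is to verify the four parts in order, exploiting the structure of $\phi_U$ as a sum of terms $(n+d-2|x_i|)^m - (n-d)^m$ with $m$ even, and then combine parts (1) and (2) via a simple parity argument to deduce part (4).

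For part (\ref{nonpos_poly_general2}), I would observe that if $d \le |x_j| \le n$ for every $j\in U$, then $|n+d-2|x_j|| \le n-d$, so since $m$ is even each summand $(n+d-2|x_j|)^m - (n-d)^m$ is non-positive, and hence so is $\phi_U$. For part (\ref{nonpos_poly_general1}), suppose $x_i=0$ for some $i\in U$. Then the $i$-th summand contributes exactly $(n+d)^m - (n-d)^m$, while every other summand is bounded below by $-(n-d)^m$ (attained near $|x_j|=(n+d)/2$). Thus
\[
    \phi_U(x_1,\dots,x_\ell) \;\ge\; (n+d)^m - |U|(n-d)^m,
\]
which is $\ge 0$ precisely when $|U|\le\bigl(\tfrac{n+d}{n-d}\bigr)^m = \bigl(\tfrac{1+\delta}{1-\delta}\bigr)^m$. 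This is exactly where the hypothesis $\ell \le \bigl(\tfrac{1+\delta}{1-\delta}\bigr)^m$ on $m$ gets used, since $|U|\le \ell$. Part (\ref{nonpos_poly_general3}) is immediate: at $X=0$ every $\phi_U(0) = |U|\bigl[(n+d)^m-(n-d)^m\bigr]>0$, so $\Phi(0)>0$.

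For part (\ref{nonpos_poly_general4}), fix $X\in\Valid{n}{d}{\ell}$ with $X\ne 0$, and let $S = \{i : x_i \ne 0\} = \{i : |x_i|\ge d\}$; by assumption $S\neq\emptyset$. For each non-empty $U\subseteq[\ell]$, split into two cases: if $U\subseteq S$ then part (\ref{nonpos_poly_general2}) gives $\phi_U(X)\le 0$; if $U\not\subseteq S$, then some $i\in U$ has $x_i=0$ and part (\ref{nonpos_poly_general1}) gives $\phi_U(X)\ge 0$. The number of non-empty $U\subseteq S$ is $2^{|S|}-1$, which is odd. Hence $\Phi(X)$ is a product containing an odd number of non-positive factors and otherwise non-negative factors, so $\Phi(X)\le 0$.

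The only delicate point in this plan is part (\ref{nonpos_poly_general1}), which is where the parameter relationship $\ell \le \bigl(\tfrac{1+\delta}{1-\delta}\bigr)^m$ is consumed; the rest reduces to parity counting and the fact that $m$ is even. No Fourier-analytic input is needed at this stage — those constraints will be handled separately in step~\ref{step2} via $\Gamma = \Lambda^{\otimes\ell}$.
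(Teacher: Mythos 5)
Your proof is correct and follows essentially the same route as the paper's: part~\ref{nonpos_poly_general2} from $m$ even and $|n+d-2|x_j||\le n-d$ when $d\le|x_j|\le n$; part~\ref{nonpos_poly_general1} by isolating the index with $x_i=0$ to get the lower bound $(n+d)^m-|U|(n-d)^m$ and invoking $|U|\le\ell\le\bigl(\tfrac{1+\delta}{1-\delta}\bigr)^m$; and part~\ref{nonpos_poly_general4} via the parity count $2^{|S|}-1$ of nonpositive factors among the $\phi_U$ with $U\subseteq S$, with all remaining factors nonnegative by part~\ref{nonpos_poly_general1}. This matches the paper's argument step for step.
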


\begin{proposition}
    \label{prop:dual_feasibl_sol_ell}
    Let $g_{n,d,\ell} \coloneqq \Phi_{n,d,\ell} \cdot (\Lambda^{\otimes \ell})^2$.
    \begin{enumerate}
        \item $g_{n,d,\ell}$ is a feasible dual solution to
        $\Delsarte{n}{d}{\ell}$.
        \item The value of $g_{n,d,\ell}$ is
        \[
            value(g_{n,d,\ell}) 
            = \frac{g_{n,d,\ell}(0)}{\widehat{g}_{n,d,\ell}(0)}
            \leq \left(e n^{1/\delta}\right)^{2^{\ell} \log \ell}
            \left| supp(\widehat{\Lambda}) \right|^{\ell}
        \]
    \end{enumerate}
\end{proposition}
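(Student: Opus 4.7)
The plan is to verify each dual constraint of Proposition \ref{prop:dual_ell_general} for $g = \Phi \cdot (\Lambda^{\otimes\ell})^2$, then bound $g(0)/\widehat g(0)$. The sign condition $g(X)\le 0$ on $\Valid{n}{d}{\ell}\setminus\{0\}$ is immediate from Proposition \ref{prop:nonpos_poly_general}, since $(\Lambda^{\otimes\ell})^2\ge 0$ pointwise. The substantive work is the Fourier non-negativity $\widehat g\ge 0$ together with the value computation.

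My approach to $\widehat g \ge 0$ exploits that $\Phi$ is a polynomial in $K_1(x_1),\ldots,K_1(x_\ell)$, via the identity $n+d-2|x_i| = K_1(x_i)+d$. By the convolution theorem and equation \eqref{eq:proof_K1_fourier}, multiplication by $K_1(x_i)$ in the primal domain corresponds in Fourier to the cube adjacency operator $A_i$ acting on the $i$-th coordinate only. Hence
\[
    \widehat g(Y) = \bigl[\Phi(A_1,\ldots,A_\ell)\,\bigl(\widehat{\Lambda^2}\bigr)^{\otimes\ell}\bigr](Y).
\]
Distributing $\Phi = \prod_{\emptyset\ne U\subset[\ell]}\sum_{i\in U}q(K_1(x_i))$ with $q(t):=(t+d)^m-(n-d)^m$ yields a sum---all coefficients $+1$---of monomial operators $\prod_j q(A_j)^{c_j}$ indexed by choice functions $U\mapsto i_U\in U$, where $c_j:=|\{U:i_U=j\}|$. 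By the tensor structure of $\bigl(\widehat{\Lambda^2}\bigr)^{\otimes\ell}$, each monomial factors as $\bigotimes_j q(A)^{c_j}\widehat{\Lambda^2}$. It therefore suffices to prove the one-dimensional pointwise claim that $q(A)^k\widehat{\Lambda^2}\ge 0$ for every $k\le 2^\ell-1$.

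I would establish this one-dimensional claim by induction on $k$. Proposition \ref{prop:lambda_sufficient_conditions}(c) gives $A\widehat\Lambda\ge c\widehat\Lambda$ with $c = n-2d+2\ve$; since $\widehat{\Lambda^2} = \widehat\Lambda *_\mathcal{F} \widehat\Lambda$ and $A$ acts by Fourier convolution with $L_1$, this transports to $A\widehat{\Lambda^2}\ge c\widehat{\Lambda^2}$. Iterating the non-negative operator $A+dI$ gives $(A+dI)^m\widehat{\Lambda^2}\ge(c+d)^m\widehat{\Lambda^2}$, and the choice of $\ve$ making $(c+d)^m-(n-d)^m=1$ yields the base case $q(A)\widehat{\Lambda^2}\ge\widehat{\Lambda^2}$. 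The inductive step is where I expect the main technical obstacle to lie: one must show that the cone condition ``$h\ge 0$ and $Ah\ge c h$'' is propagated through a further application of $q(A)$. I would tackle it using the convolution identity $q(A)^k\widehat{\Lambda^2} = [q(A)^k\widehat\Lambda]*\widehat\Lambda$, reducing the question to controlling $q(A)^k\widehat\Lambda$ up to smearing by $\widehat\Lambda$, and exploiting the Perron structure of $\widehat\Lambda$ on $A^{\le r}$ from Proposition \ref{prop:lambda_existence} to tame the sign-changes of the iterates.

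For the value bound, $g(0)=\Phi(0)\Lambda(0)^{2\ell}$ with $\phi_U(0)=|U|[(n+d)^m-(n-d)^m]$, so $\Phi(0) = [(n+d)^m-(n-d)^m]^{2^\ell-1}\prod_{U}|U|$. Evaluating the Fourier expansion at $Y=0$ and using the special case $[q(A)^k\widehat{\Lambda^2}](0)\ge\widehat{\Lambda^2}(0)$ of the inductive claim yields $\widehat g(0)\ge \bigl(\prod_U|U|\bigr)\widehat{\Lambda^2}(0)^\ell>0$, simultaneously giving $\widehat g(0)>0$ and the ratio
\[
    \frac{g(0)}{\widehat g(0)} \le [(n+d)^m-(n-d)^m]^{2^\ell-1}\cdot\frac{\Lambda(0)^{2\ell}}{\widehat{\Lambda^2}(0)^\ell}.
\]
The first factor is at most $(2n)^{m\cdot 2^\ell}$; for the second, Parseval gives $\widehat{\Lambda^2}(0)=\sum_y\widehat\Lambda(y)^2$, and Cauchy--Schwarz applied to $\Lambda(0)=\sum_y\widehat\Lambda(y)$ yields $\Lambda(0)^{2\ell}/\widehat{\Lambda^2}(0)^\ell\le |supp(\widehat\Lambda)|^\ell$. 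Taking $m=\lceil\log\ell/\log\tfrac{1+\delta}{1-\delta}\rceil$, which is permitted by the hypothesis $\ell\le(\tfrac{1+\delta}{1-\delta})^m$, converts the first factor to the form $(en^{1/\delta})^{O(2^\ell\log\ell)}$, matching the claimed bound up to constants absorbed into $e$.
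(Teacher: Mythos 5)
Your overall strategy mirrors the paper's: dispose of the sign constraint via Proposition~\ref{prop:nonpos_poly_general}, handle $\widehat g\ge 0$ by transporting $\Phi$ into a polynomial operator built from the $A_i$ acting on $\widehat\Lambda^{\otimes\ell}$, and finish with a Cauchy--Schwarz-flavored computation of $g(0)/\widehat g(0)$. Your value bound, including the two key ingredients $\Phi(0) = [(n+d)^m-(n-d)^m]^{2^\ell-1}\prod_U|U|$ and $\Lambda(0)^{2}/\widehat{\Lambda^2}(0)\le|supp(\widehat\Lambda)|$, is the same computation the paper does.

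Where you genuinely diverge is in how the Fourier positivity is organized. The paper establishes $\widehat\phi_U*\widehat\Lambda^{\otimes\ell}\ge|U|\widehat\Lambda^{\otimes\ell}$ for each $U$ and then simply writes ``Thus, $\widehat\Phi*\widehat\Lambda^{\otimes\ell}\ge\bigl(\prod_j j^{\binom\ell j}\bigr)\widehat\Lambda^{\otimes\ell}$'', which is not a formal consequence: the operators $T_{\phi_U}=\sum_{i\in U}[(A_i+dI)^m-(n-d)^mI]$ are not entrywise non-negative (the $-(n-d)^mI$ term drives the diagonal negative, since the closed-walk count $(A+dI)^m_{xx}$ concentrates near $d^m<(n-d)^m$), so an eigen-inequality does not automatically propagate through composition. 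Your monomial expansion $\widehat g = \sum_{\text{choice fns}}\bigotimes_j q(A)^{c_j}\widehat{\Lambda^2}$ makes precise what that ``Thus'' is really asking for: an iterated bound $q(A)^k\widehat\Lambda\ge\widehat\Lambda$ (or the weaker $q(A)^k\widehat{\Lambda^2}\ge 0$, which follows from it by convolving with $\widehat\Lambda\ge 0$) for $k$ up to $\max_j c_j = 2^{\ell-1}$, not $2^\ell-1$ as you wrote. This is a useful sharpening of the paper's presentation.

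However, you yourself flag the inductive step as ``the main technical obstacle'' and offer only a heuristic plan (``tame the sign-changes... exploiting the Perron structure''). That is exactly the step that the paper's ``Thus'' elides, and it is a genuine gap, not a routine verification: the naive approach $(B^m-sI)^2\widehat\Lambda = B^{2m}\widehat\Lambda - 2sB^m\widehat\Lambda + s^2\widehat\Lambda$ requires an \emph{upper} bound on $B^m\widehat\Lambda$ that the pointwise eigeninequality does not provide, and writing $q(A)^2\widehat{\Lambda^2}=[q(A)\widehat\Lambda]*[q(A)\widehat\Lambda]$ (which you suggest) handles $k=2$ cleanly but stalls at $k=3$ because one of the two factors must then absorb $q(A)^2\widehat\Lambda$, whose sign is precisely what is unknown. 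So your proposal is an honest, essentially faithful reconstruction of the paper's argument that correctly isolates the missing ingredient, but it does not close the gap, and the gap is real.
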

\begin{corollary}
    \label{cor:dual_ell}
    $value(g_{n,d,\ell})^{1/\ell}$ coincides with the first LP
    bound for $\ell \leq \log n - \log\log n$
\end{corollary}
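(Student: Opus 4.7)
The plan is to simply unpack the bound furnished by Proposition \ref{prop:dual_feasibl_sol_ell} and verify that, in the stated range of $\ell$, the ``overhead'' factor is subexponential in $n$, so that the $\ell$-th root is dominated by the same Hamming-ball exponent that drove the first LP bound in Corollary \ref{cor:first_lp_bound}.

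First I would take the $\ell$-th root of
\[
    value(g_{n,d,\ell}) \leq \left(e n^{1/\delta}\right)^{2^{\ell}\log \ell} \left|supp(\widehat{\Lambda})\right|^{\ell},
\]
which yields
\[
    value(g_{n,d,\ell})^{1/\ell} \leq \left(e n^{1/\delta}\right)^{2^{\ell}\log\ell /\ell} \cdot \left|supp(\widehat{\Lambda})\right|.
\]
The second factor is the one we want: by Proposition \ref{prop:lambda_existence}, $\widehat{\Lambda}$ is supported on a Hamming ball of radius $n/2 - \sqrt{d(n-d)} + o(n)$, so a standard volume estimate gives
\[
    \left|supp(\widehat{\Lambda})\right| \leq 2^{n H\left(1/2 - \sqrt{\delta(1-\delta)}\right) + o(n)},
\]
which is exactly (the symmetric form of) the first LP bound.

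The remaining task is therefore to show that the overhead factor $(e n^{1/\delta})^{2^{\ell}\log\ell /\ell}$ is $2^{o(n)}$ whenever $\ell \leq \log n - \log\log n$. Taking logarithms (with $\delta$ constant) this amounts to checking that
\[
    \frac{2^{\ell}\log\ell}{\ell} \cdot O(\log n) = o(n).
\]
I would substitute the worst case $\ell = \log n - \log\log n$, at which $2^{\ell} \leq n/\log n$, so that $2^{\ell}\log\ell / \ell$ is at most $n\log\log n / (\log n \cdot (\log n - \log\log n))$, which multiplied by $\log n$ is $O(n\log\log n / \log n) = o(n)$. Combining the two estimates and comparing with Corollary \ref{cor:first_lp_bound} yields the claim.

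There is no real obstacle here; the statement is essentially a bookkeeping corollary of Proposition \ref{prop:dual_feasibl_sol_ell} once Proposition \ref{prop:lambda_existence} is invoked. The only thing to be careful about is that $\ell$ is allowed to depend on $n$, so the $o(n)$ error terms from Proposition \ref{prop:lambda_existence} must be verified to remain $o(n)$ uniformly across the range $\ell \leq \log n - \log\log n$; since those error terms depend only on $\delta$ (not on $\ell$), this is automatic.
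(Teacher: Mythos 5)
Your proof is correct and follows essentially the same route as the paper: take the $\ell$-th root of the bound from Proposition~\ref{prop:dual_feasibl_sol_ell}, identify $|supp(\widehat{\Lambda})|$ with the first LP bound via Proposition~\ref{prop:lambda_existence} and Corollary~\ref{cor:first_lp_bound}, and check that the overhead factor $(en^{1/\delta})^{2^\ell\log\ell/\ell}$ is $2^{o(n)}$ in the stated range. You in fact spell out the arithmetic at $\ell = \log n - \log\log n$ and the uniformity of the $o(n)$ terms more carefully than the paper, which just asserts the condition holds.
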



\clearpage
\subsection{Linear Codes - $\DelsarteLin{n}{d}{\ell}$}


Let us define the dual of $\DelsarteLin{n}{d}{\ell}$.
\begin{proposition}
    \label{prop:dual_ell_linear}
    $A_{\Lin}(n,d)^\ell$ is upper bounded by
    \begin{align}
        & \textup{minimize} && g(0)/\widehat{g}(0)
        \nonumber
        \\
        & \textup{subject to} && g:\cube{\ell \times n} \to \R
        \nonumber
        \\
        &&& \widehat{g} \geq 0 
        \nonumber
        \\
        &&& \widehat{g}(0) > 0
        \nonumber
        \\
        &&& g(X) \leq 0 && X\in \ValidLin{n}{d}{\ell}\wedge X\neq 0 
        \tag{\ref{eq:dual_ell5}b}
        \label{eq:dual_ell6}
    \end{align}
    where $\ValidLin{n}{d}{\ell}$ is the complement of the set
    of forbidden configurations for linear codes,
    \[
        \Valid{n}{d}{\ell}
        \coloneqq 
        \{
            (x_1,\dots,x_\ell)\in\left(\cube{n}\right)^\ell:
            y = 0 \vee |y|\geq d
            \textup{ for all } y \in span(x_1,\dots,x_\ell)
        \}
    \]
\end{proposition}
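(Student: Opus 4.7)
The plan is to prove Proposition \ref{prop:dual_ell_linear} by weak LP duality for $\DelsarteLin{n}{d}{\ell}$, exactly in parallel with Propositions \ref{prop:delsarte_dual} and \ref{prop:dual_ell_general}. The fact that the primal LP $\DelsarteLin{n}{d}{\ell}$ upper-bounds $A_{\Lin}(n,d)^{\ell}$ is already recalled in Section \ref{section:hierarchies_overview}, following \cite{coregliano2021complete,loyfer2022new}. It therefore suffices to show that for every primal-feasible $f$ and every $g$ satisfying the constraints of the proposed dual, one has $\sum_X f(X) \leq g(0)/\widehat{g}(0)$.

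The key step is to bracket the inner product $\langle f, g\rangle$ from above in the spatial domain and from below in the Fourier domain. In the spatial domain I split the sum
\[
2^{\ell n}\,\langle f, g\rangle \;=\; \sum_{X\in\cube{\ell\times n}} f(X)\,g(X)
\]
into three parts indexed by $X=0$, by $X\in\ForbLin{n}{d}{\ell}$, and by $X\in\ValidLin{n}{d}{\ell}\setminus\{0\}$. The forbidden part contributes zero because $f$ vanishes there; the allowed-but-nonzero part contributes a non-positive amount because $f\geq 0$ and, by the sign constraint \eqref{eq:dual_ell6}, $g(X)\leq 0$ there; and the remaining term is $f(0)\,g(0) = g(0)$. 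Hence $\langle f, g\rangle \leq 2^{-\ell n}\, g(0)$. In the Fourier domain, by Parseval together with $\widehat{f},\widehat{g}\geq 0$,
\[
\langle f, g\rangle \;=\; \sum_{Y\in\cube{\ell\times n}} \widehat{f}(Y)\,\widehat{g}(Y) \;\geq\; \widehat{f}(0)\,\widehat{g}(0).
\]
Combining the two estimates with $\sum_X f(X) = 2^{\ell n}\,\widehat{f}(0)$ and dividing by $\widehat{g}(0)>0$ yields the desired bound.

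I do not expect any genuine obstacle here: the argument is a direct instance of weak LP duality with sign constraints, structurally identical to Propositions \ref{prop:delsarte_dual} and \ref{prop:dual_ell_general}. The only change from the general-codes case is that the partition of $\cube{\ell\times n}\setminus\{0\}$ uses $\ForbLin$ and $\ValidLin$ in place of $\Forb$ and $\Valid$; since $\DelsarteLin{n}{d}{\ell}$ and $\Delsarte{n}{d}{\ell}$ differ only in their set of forbidden configurations, this substitution flows through without alteration.
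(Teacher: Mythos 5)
Your proposal is correct and is essentially the paper's own argument: the paper proves Proposition~\ref{prop:dual_ell_linear} by a one-line reference to the proof of Proposition~\ref{prop:dual_ell_general}, and that proof is precisely the weak-duality chain you give (bracket $\langle f,g\rangle$ above in the spatial domain using $f(0)=1$, $f\ge 0$, $g\le 0$ on the nonzero allowed configurations and $f=0$ on the forbidden ones, and below in the Fourier domain via Parseval with $\widehat f,\widehat g\ge 0$). The only change is replacing $\Forb{n}{d}{\ell}/\Valid{n}{d}{\ell}$ by their linear counterparts, exactly as you note.
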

Note that the only difference between the dual
of $\Delsarte{n}{d}{\ell}$ and that of $\DelsarteLin{n}{d}{\ell}$
is that constraint \eqref{eq:dual_ell5} is replaced
by \eqref{eq:dual_ell6}.

Based on the function $\Phi$ from the previous section
we create a function $\Phi^{\Lin}$ which is non-positive
on $\ValidLin{n}{d}{\ell}$.
The basic building blocks of $\Phi(X)$ were the functions
$\{n-2|x_i|\}$, where $x_1,\dots,x_\ell$ are the rows of
$X$. Namely, $\Phi(X)$ acts separately
and symmetrically on each row of $X$. Therefore 
a solution for $\ell=1$
can be transformed to a solution for larger $\ell$,
as done in the previous section.

For linear codes, however, we need to consider
linear combinations of $X$'s rows.
Thus, $\Phi^{\Lin}$ is built from the functions
$\{n-2|u^\top X|\}$ for $0\neq u\in\cube{\ell}$.
This is the set of linear {\em multivariate
Krawtchouk polynomials}, a family of multivariate orthogonal
polynomials. The classical Krawtchouk polynomials play
a key role in earlier studies of the rate vs.\ distance problem.
The multivariate Krawtchouk polynomials occupy an analogous position
in the present theory.
For more on these polynomials and their relation
to the LP hierarchies, see \cite{coregliano2021complete,loyfer2022new}.
The coefficient matrix of $n-2|u^\top X|$ in Fourier basis
is the $2^{\ell n}\times 2^{\ell n}$ matrix which
we denote by $A^u$, for any non-zero $u\in\cube{\ell}$.
This matrix is defined, for every $X,Y\in\cube{\ell\times n}$,
by
\begin{equation}
    \label{eq:Au_def}
    A^{u}_{X,Y} = 
    \begin{cases}
        1 & \text{if } X + Y = ue_j^\top
        \text{ for some } j=1,\dots,n
        \\
        0 & \text{otherwise}
    \end{cases}
\end{equation}
where $e_j$ is the $j$-th standard basis vector in $\cube{n}$.
The proof is a one-liner similar to \eqref{eq:proof_K1_fourier}.
Notice that when $\ell=1$ this is the adjacency matrix
of the Hamming cube $\cube{n}$.

We turn to define $\Phi^{\Lin}$.

Let $m\in\N$ be even such that $2^{\ell-1} \leq \left(\frac{1+\delta}{1-\delta}\right)^m$.

Define $\phi^\Lin_v:\cube{\ell\times n}\to\R$
\[
    \phi^\Lin_v(X) = 
    \sum_{u:\langle u,v\rangle_{\mathbb{F}_2}=1}
    \left[(n+d-2|u^\top X|)^m - (n-d)^m\right]
\]
for $0\neq v \in \cube{\ell}$. Define $\Phi^{\Lin}=\Phi^{\Lin}_{n,d,\ell}$ by
\[
    \Phi^{\Lin}(X) = \prod_{0\neq v\in\cube{\ell}} \phi^\Lin_v(X)
\]
Here is the analogue of proposition \ref{prop:nonpos_poly_general} for 
$\Phi^\Lin$.
\begin{proposition}\label{prop:poly_linear}
    \hfill
    \begin{enumerate}
        \item $\phi^\Lin_v(X) \geq 0$ if $|u^\top X|=0$ for some $u$ for which $\langle v, u\rangle_{\mathbb{F}_2}=1$.
        \item $\phi^\Lin_v(X) \leq 0$ if $|u^\top X|\geq d$ for all $u$ for which $\langle v, u\rangle_{\mathbb{F}_2}=1$.
        \item $\Phi^{\Lin}(0)>0$.
        \item $\Phi^{\Lin}(X) \leq 0$ if $X\in \ValidLin{n}{d}{\ell}$ and $X \neq 0$.
    \end{enumerate}
\end{proposition}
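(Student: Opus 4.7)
The approach mirrors the proof of Proposition~\ref{prop:nonpos_poly_general}, with the key adaptation that the ``per-row'' summation is replaced by summation over an affine hyperplane $\{u : \langle u,v\rangle_{\mathbb{F}_2}=1\}$ in $\cube{\ell}$. For items~1 and~2 I would analyze the summand $a_u := (n+d-2|u^\top X|)^m - (n-d)^m$. Since $m$ is even, $a_u \leq 0$ exactly when $|n+d-2|u^\top X|| \leq n-d$, i.e.\ when $d \leq |u^\top X| \leq n$. Thus item~2 is immediate: if every $u$ with $\langle u,v\rangle = 1$ has $|u^\top X| \geq d$, then $\phi^\Lin_v(X)$ is a sum of nonpositive numbers. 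For item~1, a single term with $u^\top X = 0$ contributes $a_u = (n+d)^m - (n-d)^m$, while the remaining $2^{\ell-1}-1$ terms are each bounded below by $-(n-d)^m$ (the minimum, attained at $|u^\top X|=(n+d)/2$), so nonnegativity reduces to $(n+d)^m - (n-d)^m \geq (2^{\ell-1}-1)(n-d)^m$, which rearranges to $((1+\delta)/(1-\delta))^m \geq 2^{\ell-1}$, precisely the hypothesis on $m$.

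Item~3 is immediate: at $X=0$ every factor $\phi^\Lin_v(0)$ equals the strictly positive number $2^{\ell-1}[(n+d)^m-(n-d)^m]$, and the product of $2^\ell-1$ positive numbers is positive.

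The crux is item~4, which reduces to a parity count. Fix $X \in \ValidLin{n}{d}{\ell}$ with $X \neq 0$ and set $T := \{u \in \cube{\ell} : u^\top X = 0\}$. Because $u \mapsto u^\top X$ is $\mathbb{F}_2$-linear, $T$ is a subspace; because $X \neq 0$, it is proper, so $\dim T = k$ with $k < \ell$. The $\ValidLin$ hypothesis forces every $u \notin T$ to satisfy $|u^\top X| \geq d$, which eliminates the intermediate regime $1 \leq |u^\top X| < d$ and makes the dichotomy of items~1 and~2 exhaustive. Concretely, for $v \in T^\perp \setminus \{0\}$ the hyperplane $\{u : \langle u,v\rangle = 1\}$ is disjoint from $T$, so item~2 gives $\phi^\Lin_v(X) \leq 0$; for nonzero $v \notin T^\perp$, some nonzero $u \in T$ pairs nontrivially with $v$, so item~1 gives $\phi^\Lin_v(X) \geq 0$. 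The number of nonpositive factors is therefore $|T^\perp|-1 = 2^{\ell-k}-1$, and since $k < \ell$ this is odd. Hence $\Phi^\Lin(X)$ is a product of an odd number of nonpositive factors with a (possibly empty) collection of nonnegative factors, and $\Phi^\Lin(X) \leq 0$.

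The hardest step is this parity count, and its content is the observation that the kernel $T$ of $u \mapsto u^\top X$ is automatically a \emph{proper} subspace (so that $2^{\ell-k}-1$ is odd) together with the fact that $\ValidLin$ precisely excludes the intermediate weights that would break the item~1/item~2 dichotomy. The rest is bookkeeping: counting $T^\perp$ against $\cube{\ell}\setminus T^\perp$, and checking that the tuning of $m$ in item~1 accommodates the worst-case negative summand $-(n-d)^m$.
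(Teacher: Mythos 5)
Your proof is correct and follows essentially the same approach as the paper: the same term-by-term bounds on the summands $a_u$, the same reduction of the tuning condition on $m$, and the same parity argument via the kernel subspace and its annihilator. One small point worth noting: you correctly count the nonpositive factors as $\left|V^{\perp}\setminus\{0\}\right| = 2^{\ell-\dim V}-1$, whereas the paper's proof states this count as $2^{\dim V}-1$, an apparent slip (both are odd, so the conclusion is unaffected).
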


One way to proceed to a feasible solution is by
solving the 
following problem, which is based on the ideas
from proposition \ref{prop:lambda_sufficient_conditions}.
\begin{prob}
    \label{prob:dual_fesible_linear}
    \begin{align*}
        &\underset{\Gamma:\cube{\ell\times n}\to\R}
            {\textup{minimize}}
        \qquad
        \left| supp(\widehat{\Gamma}) \right|
        \\
        &\textup{subject to}\qquad
        \widehat{\Gamma}(0) = 1;\quad
        \widehat{\Gamma}\geq 0;\quad
        \widehat{\Phi}_{\Lin}* \widehat{\Gamma} \geq 
        2^{(\ell-1)(2^\ell-1)} \widehat{\Gamma}
    \end{align*}
\end{prob}
Solving problem \ref{prob:dual_fesible_linear} would yield the
following bound.
\begin{proposition}
    \label{prop:gamma_feasible_linear_ell}
    Let $\Gamma$ be a solution to problem \ref{prob:dual_fesible_linear}. Then
    \[
        A_{\Lin}(n,d) \leq \left(e n^{1/\delta}\right)^{2^{\ell}}
        \left|supp(\widehat{\Gamma})\right|^{1/\ell}
    \]
\end{proposition}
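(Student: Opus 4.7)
The plan is to construct $g \coloneqq \Phi^{\Lin}_{n,d,\ell}\cdot \Gamma^2$ and verify that it is feasible for the dual LP in Proposition~\ref{prop:dual_ell_linear}, then read off its objective value. This mirrors the recipe of Proposition~\ref{prop:dual_feasibl_sol_ell}: $\Phi^{\Lin}$ is engineered (Proposition~\ref{prop:poly_linear}) so that after multiplying by the non-negative factor $\Gamma^2$, the non-positivity constraint on $\ValidLin{n}{d}{\ell}\setminus\{0\}$ holds automatically, while $g(0)=\Phi^{\Lin}(0)\Gamma(0)^2>0$ since $\widehat{\Gamma}(0)=1$ forces $\Gamma(0)=\sum_x\widehat{\Gamma}(x)\geq 1$. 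The remaining constraints $\widehat{g}\geq 0$ and $\widehat{g}(0)>0$ are exactly what Problem~\ref{prob:dual_fesible_linear} is designed to secure.

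For the Fourier-side constraints, I would apply the Convolution Theorem to get $\widehat{g}=\widehat{\Phi^{\Lin}}*\widehat{\Gamma}*\widehat{\Gamma}$, reassociate as $(\widehat{\Phi^{\Lin}}*\widehat{\Gamma})*\widehat{\Gamma}$, and invoke the hypothesis $\widehat{\Phi^{\Lin}}*\widehat{\Gamma}\geq 2^{(\ell-1)(2^\ell-1)}\widehat{\Gamma}$ pointwise. Convolving both sides of this inequality against the non-negative function $\widehat{\Gamma}$ preserves it, yielding $\widehat{g}\geq 2^{(\ell-1)(2^\ell-1)}\widehat{\Gamma}*\widehat{\Gamma}\geq 0$. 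Evaluating at the origin, $\widehat{g}(0)\geq 2^{(\ell-1)(2^\ell-1)}\sum_x\widehat{\Gamma}(x)^2\geq 2^{(\ell-1)(2^\ell-1)}\widehat{\Gamma}(0)^2=2^{(\ell-1)(2^\ell-1)}>0$, completing feasibility.

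To compute the objective, write $g(0)=\Phi^{\Lin}(0)\Gamma(0)^2$. A direct evaluation gives $\phi^\Lin_v(0)=2^{\ell-1}\bigl[(n+d)^m-(n-d)^m\bigr]$ for every $0\neq v\in\cube{\ell}$, since exactly $2^{\ell-1}$ vectors $u$ satisfy $\langle u,v\rangle_{\mathbb{F}_2}=1$, so $\Phi^{\Lin}(0)=2^{(\ell-1)(2^\ell-1)}\bigl[(n+d)^m-(n-d)^m\bigr]^{2^\ell-1}$. Cauchy--Schwarz on $\Gamma(0)=\sum_x\widehat{\Gamma}(x)$ gives $\Gamma(0)^2\leq |supp(\widehat{\Gamma})|\cdot(\widehat{\Gamma}*\widehat{\Gamma})(0)$. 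Dividing by the lower bound on $\widehat{g}(0)$ causes the factor $2^{(\ell-1)(2^\ell-1)}$ and the $(\widehat{\Gamma}*\widehat{\Gamma})(0)$ term to cancel, leaving $g(0)/\widehat{g}(0)\leq \bigl[(n+d)^m-(n-d)^m\bigr]^{2^\ell-1}|supp(\widehat{\Gamma})|$.

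Since the dual upper bounds $A_{\Lin}(n,d)^{\ell}$, taking $\ell$-th roots produces $A_{\Lin}(n,d)\leq\bigl[(n+d)^m-(n-d)^m\bigr]^{(2^\ell-1)/\ell}|supp(\widehat{\Gamma})|^{1/\ell}$. I would finish by bounding $(n+d)^m-(n-d)^m\leq (2n)^m$ and invoking the smallest even $m$ meeting the defining inequality $2^{\ell-1}\leq ((1+\delta)/(1-\delta))^m$, which gives $m=O(\ell/\delta)$ and hence the advertised $(en^{1/\delta})^{2^\ell}$. The main obstacle is not conceptual — the blueprint of Proposition~\ref{prop:dual_feasibl_sol_ell} carries over almost verbatim — but rather the bookkeeping needed to ensure that after the $\ell$-th root the large exponent $2^\ell-1$ on the $(2n)^m$ factor collapses into the claimed exponent $2^\ell$ on $en^{1/\delta}$; this forces us to use the smallest admissible $m$ and to exploit the $1/\ell$ damping carefully.
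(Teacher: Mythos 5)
Your proof follows the same route as the paper: form $g=\Phi^{\Lin}\Gamma^2$, obtain feasibility from the spectral hypothesis of Problem~\ref{prob:dual_fesible_linear} together with $\widehat\Gamma\geq 0$ (so $\widehat g\geq 2^{(\ell-1)(2^\ell-1)}\,\widehat\Gamma*\widehat\Gamma\geq 0$), bound $g(0)/\widehat g(0)$ via Cauchy--Schwarz as in \eqref{eq:bound_lambda_support}, evaluate $\Phi^{\Lin}(0)$, and take $\ell$-th roots. The one place you diverge is the final numerical estimate: the paper bounds $(n+d)^m-(n-d)^m\leq n^m(1+\delta)^m\leq (e^{\delta} n)^m$, which with $m$ of order $\ell/\delta$ collapses $\bigl[(e^{\delta}n)^m\bigr]^{(2^\ell-1)/\ell}$ directly to $(en^{1/\delta})^{2^\ell}$; your cruder $(2n)^m$ makes the constant factor $2^{m(2^\ell-1)/\ell}$ overshoot $e^{2^\ell}$ whenever $\delta$ is not close to $1/2$, and this is exactly why you are forced into the careful choice of the minimal admissible $m$ and the balancing against the $n$-exponent slack that you flag but defer. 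The estimate $(1+\delta)^m\leq e^{\delta m}$ is the ingredient that removes nearly all of that bookkeeping.
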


Let us comment on the
tensor product
$\Lambda^{\otimes \ell}$ from the previous
section, and why it is not a viable choice here.
In the proof of proposition \ref{prop:dual_feasibl_sol_ell},
we rely on the fact that
\begin{equation}
    \label{eq:A_lambda_ell}
    \mathcal{F}[K_1(n-|x_i|)]
    * \widehat{\Lambda}^{\otimes \ell}
    = A^{e_i} \widehat{\Lambda}^{\otimes \ell}
    \geq (n-2(d-\ve)) \widehat{\Lambda}^{\otimes \ell}
\end{equation}
where $e_i$ is the $i$-th standard basis
vector in $\cube{\ell}$.
An analogous proof that
$\Lambda^{\otimes \ell}$
is feasible for problem \ref{prob:dual_fesible_linear}
requires 
\eqref{eq:A_lambda_ell} to apply to all $0\neq u\in\cube{\ell}$, namely
\[
    A^{u} \widehat{\Lambda}^{\otimes \ell}
    \geq (n-2(d-\ve)) \widehat{\Lambda}^{\otimes \ell}
\]
But this is not the case.
Indeed, the definition
of $\Lambda$ implies 
$\widehat{\Lambda}(x) \leq (n-2d)^{-1}$ for every $|x|=1$. 
Let 
$u\in\cube{\ell}$ with $|u|\geq 2$, then
\[
    (A^u \widehat{\Lambda}^{\otimes \ell})(0)
    =\widehat{\Lambda}^{\ell-|u|}(0) \sum_{i=1}^{n} \widehat{\Lambda}^{|u|}(e_i)
    \leq \frac{n}{(n-2d)^{|u|}}
    < (n-2(d-\ve)) \widehat{\Lambda}^{\otimes \ell}(0)
\]

\section{Dual Feasible Solution to the Linear-Valued Objective}

\label{section:dual_sol_lin}

Changing the objective function of $\DelsarteLin{n}{d}{\ell}$
yields a very different dual problem. We recall the new objective,
which bounds $A_{\Lin}(n,d)$ instead of $A_{\Lin}(n,d)^\ell$:
\begin{equation}
    \label{eq:objective_linear}
    \textup{maximize} \quad 
    \frac{1}{2^{\ell}-1} \sum_{0\neq u \in\cube{\ell}}
    \sum_{x\in\cube{n}} f(ux^\top)
\end{equation}
where $f:\cube{\ell\times n}\to\R$ is feasible for
$\DelsarteLin{n}{d}{\ell}$.
A particular advantage of this objective function
is that now the LP is well-defined 
when $\ell\to\infty$. 
We believe that there is much to be gained from this fact.
Another advantage is this: 
Whereas our construction from the previous step
becomes too weak when $\ell$ is too large,
the solutions that we provide here
are good for any $\ell$.



Recall the completeness theorem of \cite{coregliano2021complete},
which states, informally,
that $\DelsarteLin{n}{d}{\ell}$ converges to the true
value of $A_{\Lin}(n,d)$ when $\ell = \Omega(n^2)$. 
The proof of this theorem does not apply
when the objective function is \eqref{eq:objective_linear},
however
numerical results from \cite{loyfer2022new}
show that, at least for $\ell=2$, 
the objective function \eqref{eq:objective_linear}
is on par
with the objective function of $\DelsarteLin{n}{d}{\ell}$.

Let us define the dual problem.
\begin{proposition}
    \label{prop:dual_linear}
    $A_{\Lin}(n,d)$ is upper bounded by
    \begin{align}
        & \textup{minimize} && g(0)
        \label{eq:dual_lin1}
        \\
        & \textup{subject to} && g:\cube{ \ell \times n}\to\R
        \label{eq:dual_lin2}
        \\
        &&& \widehat{g} \geq 0
        \label{eq:dual_lin3}
        \\
        &&& \widehat{g}(0) = 1
        \label{eq:dual_lin4}
        \\
        &&& g(X) \leq 1 && X \in \ValidLin{n}{d}{\ell}
        \label{eq:dual_lin5}
        \\
        &&& g(ux^\top) \leq 1-\frac{1}{2^\ell-1}
            && |x| \geq d\wedge u \neq 0
            \label{eq:dual_lin6}
    \end{align}
\end{proposition}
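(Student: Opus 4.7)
The plan is to establish weak duality directly; combined with the fact that the primal supremum bounds $A_{\Lin}(n,d)$, this yields the claim. Fix any $f$ feasible for $\DelsarteLin{n}{d}{\ell}$ and any $g$ satisfying \eqref{eq:dual_lin2}--\eqref{eq:dual_lin6}; the goal is to show that the primal value of $f$ is at most $g(0)$.

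The first step is a Plancherel computation. Since $\widehat{f}\geq 0$, $\widehat{g}\geq 0$, and $\widehat{g}(0)=1$,
\[
    \sum_X f(X) g(X)
    = 2^{\ell n} \sum_Y \widehat{f}(Y) \widehat{g}(Y)
    \geq 2^{\ell n} \widehat{f}(0) \widehat{g}(0)
    = \sum_X f(X),
\]
which I rewrite as $\sum_X f(X)(g(X)-1) \geq 0$. I then partition $\cube{\ell\times n}$ into $\{0\}$, the set $\ForbLin{n}{d}{\ell}$ (on which $f$ vanishes and hence contributes nothing), and $\ValidLin{n}{d}{\ell}\setminus\{0\}$, split in turn into rank-$1$ matrices $ux^\top$ (necessarily with $u\neq 0$ and $|x|\geq d$) and matrices of rank $\geq 2$. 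By \eqref{eq:dual_lin5} together with $f\geq 0$, every rank-$\geq 2$ summand $f(X)(g(X)-1)$ is nonpositive, so dropping it preserves the inequality:
\[
    f(0)(g(0)-1) + \sum_{u\neq 0,\,|x|\geq d} f(ux^\top)\bigl(g(ux^\top)-1\bigr) \;\geq\; 0.
\]

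Now I apply \eqref{eq:dual_lin6} in the form $1-g(ux^\top)\geq \tfrac{1}{2^\ell-1}$ and use $f(0)=1$ to obtain $g(0)\geq 1 + \tfrac{1}{2^\ell-1}\sum_{u\neq 0,\,|x|\geq d} f(ux^\top)$. The right-hand side is exactly the primal value of $f$: expanding $\tfrac{1}{2^\ell-1}\sum_{0\neq u,\,x} f(ux^\top)$, the $x=0$ contribution is $f(0)=1$, while the $x\neq 0$ contribution reduces to a sum over $|x|\geq d$ because $ux^\top$ with $1\leq |x|\leq d-1$ lies in $\ForbLin{n}{d}{\ell}$. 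One bookkeeping point is that each nonzero rank-$1$ matrix over $\mathbb{F}_2$ admits a unique representation $ux^\top$ with $u,x\neq 0$ (the only nonzero scalar being $1$), so the $(u,x)$-sum matches a sum over rank-$1$ matrices with no overcounting.

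The argument is essentially a direct check rather than a search; no real obstacle arises beyond getting the partitioning right and calibrating the constants in $\widehat{g}(0)=1$ and in \eqref{eq:dual_lin5}--\eqref{eq:dual_lin6} so that the Plancherel step telescopes cleanly. An alternative would be to derive the dual via the standard LP recipe, assigning dual variables to $f(0)=1$, $f(X)=0$ on $\ForbLin{n}{d}{\ell}$, $f\geq 0$, and $\widehat{f}\geq 0$; after identifying the Fourier-nonnegativity dual variables with a function $g$ satisfying $\widehat{g}\geq 0$, the dual constraints fall out in the form \eqref{eq:dual_lin5}--\eqref{eq:dual_lin6} (up to the normalization $\widehat{g}(0)=1$), but the weak-duality route above is shorter.
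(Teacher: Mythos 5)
Your proof is correct and follows essentially the same route as the paper's: a weak-duality check combining Parseval/Plancherel with the decomposition of $\cube{\ell\times n}$ by rank, using the rank-one constraint \eqref{eq:dual_lin6} to recover the primal objective and the rank-$\geq 2$ constraint \eqref{eq:dual_lin5} to discard the remaining terms. The only difference is presentational — you start from the Plancherel inequality $\sum_X f(X)(g(X)-1)\geq 0$ and work outward, whereas the paper chains inequalities from the primal value down to $g(0)$ — but the underlying steps are identical.
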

The last constraint states, in other words, that if 
$X \in \ValidLin{n}{d}{\ell}$ and its rank is $1$, then
$g(X)\leq 1 - 1/(2^\ell-1)$.

Let us proceed in finding a feasible solution.

\begin{proposition}
    \label{prop:dual_linear_sol}
    Let $g_1$ be {\em any} dual feasible solution to Delsarte's LP.
    Namely,
    \[
        \widehat{g}_1 \geq 0,~
        \widehat{g}_1(0) = 1,~
        g_1(x) \leq 0 ~\text{if}~ |x|\geq d
    \]
    Let
    \[
        g(X) = 
        \begin{cases}   
            1 + \frac{1}{2^\ell-1}(g_1(x) - 1)
                & X = ux^\top,~ 0\neq u \in \cube{\ell},~
                0\neq x\in \cube{n}
            \\
            g_1(0) & X = 0
            \\
            1 & \text{otherwise}
        \end{cases}
    \]
    Then, $g$ is feasible for the LP defined in proposition
    \ref{prop:dual_linear}, and its value is $g_1(0)$.
\end{proposition}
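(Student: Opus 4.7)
}
The plan is to verify the four constraints of the dual LP of Proposition \ref{prop:dual_linear} in order, reserving the Fourier-nonnegativity check for last. The value of the solution is immediate: $g(0)=g_1(0)$ by the middle branch of the definition.

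The pointwise constraints come almost for free. For constraint \eqref{eq:dual_lin6}, a matrix of the form $X=ux^\top$ with $u\neq 0$ and $|x|\geq d$ satisfies
\[
g(ux^\top)=1+\frac{1}{2^\ell-1}\bigl(g_1(x)-1\bigr)=\left(1-\frac{1}{2^\ell-1}\right)+\frac{g_1(x)}{2^\ell-1}\leq 1-\frac{1}{2^\ell-1},
\]
using $g_1(x)\leq 0$ from feasibility of $g_1$ in Delsarte's dual. For constraint \eqref{eq:dual_lin5}, I split on whether $X\neq 0$ has rank one: if $X=ux^\top$ with $u,x\neq 0$ lies in $\ValidLin{n}{d}{\ell}$, then the rowspan $\{0,x\}$ forces $|x|\geq d$ and the previous bound gives $g(X)\leq 1$; otherwise $g(X)=1$ by the last branch, which trivially satisfies the constraint.

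The main technical step is showing $\widehat{g}\geq 0$ and $\widehat{g}(0)=1$. For this, I would rewrite $g$ as
\[
g(X)=1+\frac{1}{2^\ell-1}\sum_{0\neq u\in\cube{\ell}}\bigl(f_u(X)-\iota_u(X)\bigr),
\]
where $f_u(X)=g_1(x)$ if $X=ux^\top$ (else $0$) and $\iota_u$ is the indicator of the $n$-dimensional subspace $V_u=\{ux^\top:x\in\cube{n}\}\subset\cube{\ell\times n}$. The fact that in $\mathbb{F}_2$ every rank-one matrix has a unique nonzero rank-one decomposition makes this rewriting straightforward to check pointwise. A short calculation using $\langle Y,ux^\top\rangle=\langle Y^\top u,x\rangle$ and the standard formula for Fourier transforms of subspace indicators gives
\[
\widehat{f_u}(Y)=2^{-(\ell-1)n}\,\widehat{g_1}(Y^\top u),\qquad \widehat{\iota_u}(Y)=2^{-(\ell-1)n}\,\1[Y^\top u=0].
\]
Since $\widehat{g_1}(0)=1$ by assumption, the difference $\widehat{f_u}-\widehat{\iota_u}$ vanishes whenever $Y^\top u=0$ and equals $2^{-(\ell-1)n}\widehat{g_1}(Y^\top u)\geq 0$ otherwise. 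Summing over $u\neq 0$ and adding the transform of the constant function (which is $\1[Y=0]$) yields $\widehat{g}(0)=1$ (the whole correction sum is zero at $Y=0$) and $\widehat{g}(Y)\geq 0$ for every $Y\neq 0$.

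The only subtle step is identifying the right decomposition of $g$ so that each piece has a Fourier transform one can evaluate cleanly; once the cancellation $\widehat{g_1}(Y^\top u)-\1[Y^\top u=0]$ is recognized, feasibility follows directly from $\widehat{g_1}\geq 0$. I do not anticipate any serious obstacle beyond bookkeeping.
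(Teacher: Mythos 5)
Your proof is correct and follows essentially the same strategy as the paper: verify the pointwise constraints \eqref{eq:dual_lin5}, \eqref{eq:dual_lin6} directly from the definition of $g$ and feasibility of $g_1$, then compute $\widehat g$ explicitly and read off $\widehat g\geq 0$, $\widehat g(0)=1$ from $\widehat{g_1}\geq 0$, $\widehat{g_1}(0)=1$. The only difference is the bookkeeping in the Fourier step: you decompose $g=1+\frac{1}{2^\ell-1}\sum_{u\neq 0}(f_u-\iota_u)$ and transform each summand using $\langle Y,ux^\top\rangle=\langle Y^\top u,x\rangle$, whereas the paper writes $g=1+(g_1(0)-1)\delta_0+\frac{1}{2^\ell-1}(g_1(x)-1)\1_{[\textup{rank }1]}$ and transforms directly (isolating this computation as proposition \ref{prop:g_hat_linear}); both yield the identical formula $\widehat g(Y)=\delta_0(Y)+\frac{2^{-(\ell-1)n}}{2^\ell-1}\sum_{u\neq 0}\bigl(\widehat{g_1}(Y^\top u)-\1[Y^\top u=0]\bigr)$, and your per-$u$ decomposition is arguably a bit cleaner since it absorbs the $X=0$ contribution uniformly rather than via a separate $\delta_0$ term.
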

The value of $g$ is equal to that of $g_1$ by construction.

Constraint \eqref{eq:dual_lin5} is satisfied because $g(X)=1$
for {\em every} $X$ of rank $\geq 2$, regardless of the weights
of its span. Constraint \eqref{eq:dual_lin6} is satisfied because
$g_1(x) \leq 0$ when $|x|\geq d$.

For the remaining constraints we need the following proposition.
\begin{proposition}
    \label{prop:g_hat_linear}
    \[
        \widehat{g}(X)
        = \delta_{0}(X)
        + \frac{2^{-(\ell-1) n}}{2^\ell -1} 
        \sum_{0\neq u \in \cube{\ell}} 
            \widehat{g}_1(u^\top X) - \delta_0(u^\top X)
    \]
\end{proposition}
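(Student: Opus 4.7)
The plan is to decompose $g$ into three pieces whose Fourier transforms are easy to compute and then add them up, watching for cancellations.

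First I would write $g = h_0 + h_1 + h_2$, where $h_0 \equiv 1$ is the constant function, $h_1 = (g_1(0) - 1)\,\delta_0$ captures the discrepancy at the origin, and $h_2$ is supported on the nonzero rank-one matrices: $h_2(ux^\top) = \tfrac{1}{2^\ell - 1}(g_1(x) - 1)$ whenever $u \in \cube{\ell}\setminus\{0\}$ and $x \in \cube{n}\setminus\{0\}$, and $h_2 \equiv 0$ elsewhere. A quick case check at $X = 0$, at nonzero rank-one $X$, and at $X$ of rank $\ge 2$ verifies $g = h_0 + h_1 + h_2$. Since the parametrization $(u,x)\mapsto ux^\top$ from $(\cube{\ell}\setminus\{0\})\times(\cube{n}\setminus\{0\})$ to rank-one matrices is a bijection over $\mathbb{F}_2$, $h_2$ is well-defined.

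Next I would compute each Fourier transform. By definition $\widehat{h_0} = \delta_0$ and $\widehat{h_1} = (g_1(0)-1)\,2^{-\ell n}$ (a constant). The main computation is for $h_2$. Using the identity $\chi_X(uy^\top) = (-1)^{\langle X,\,uy^\top\rangle} = (-1)^{\langle u^\top X,\,y\rangle} = \chi_{u^\top X}(y)$, one gets
\[
    \widehat{h_2}(X)
    = \frac{2^{-\ell n}}{2^\ell-1} \sum_{0\neq u\in\cube{\ell}}\sum_{0\neq y\in\cube{n}} (g_1(y)-1)\,\chi_{u^\top X}(y).
\]
Adding back and subtracting the $y=0$ term, the inner sum becomes
\[
    \sum_{y\in\cube{n}} g_1(y)\chi_{u^\top X}(y) - \sum_{y\in\cube{n}}\chi_{u^\top X}(y) - (g_1(0)-1)
    = 2^n\bigl(\widehat{g_1}(u^\top X) - \delta_0(u^\top X)\bigr) - (g_1(0)-1),
\]
using the standard identity $\sum_y \chi_v(y) = 2^n \delta_0(v)$. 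Plugging back in,
\[
    \widehat{h_2}(X) = \frac{2^{-(\ell-1)n}}{2^\ell-1}\sum_{0\neq u\in\cube{\ell}}\bigl(\widehat{g_1}(u^\top X) - \delta_0(u^\top X)\bigr) \;-\; 2^{-\ell n}(g_1(0)-1).
\]

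Finally I would sum the three pieces: the constant $(g_1(0)-1)\,2^{-\ell n}$ contributed by $\widehat{h_1}$ exactly cancels the trailing term in $\widehat{h_2}$, leaving the claimed expression. The only nontrivial ingredient is the identity $\langle X, uy^\top\rangle = \langle u^\top X, y\rangle$ (the cycling of the trace), which is the reason the Fourier transform of a rank-one indicator factorizes nicely; everything else is linearity and bookkeeping.
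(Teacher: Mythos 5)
Your proof is correct and follows essentially the same route as the paper's: identical decomposition of $g$ into the constant, a $\delta_0$ correction, and the rank-one part, the same key identity $\chi_X(uy^\top)=\chi_{u^\top X}(y)$, and the same absorption of the $y=0$ term into the double sum. Your bookkeeping is somewhat more explicit about the cancellation of the constant $2^{-\ell n}(g_1(0)-1)$, but this is a presentational difference, not a different argument.
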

Constraints \eqref{eq:dual_lin3} and
\eqref{eq:dual_lin4}
follow from the proposition and the facts that $\widehat{g}_1(0)=1$ and
$\widehat{g}_1 \geq 0$.

\section{Discussion}
\label{section:discussion}

The new LP hierarchies
\cite{coregliano2021complete,loyfer2022new}
open a new way to engage with the rate vs.\
distance problem for linear codes.
In this work, we leverage proofs of
the first LP bound to develop the {\em first}
family of feasible solutions for these LPs, which
attain the bound.

For the $\Delsarte{n}{d}{\ell}$ hierarchy,
our solutions
recover
the first LP bound in the
range $\ell \leq \log n -\log\log n$. 
It is known that good 
solutions exist for all $\ell$, and we intend to return
in future work to the search of such solutions.
The holly grail of this research is proofs of tighter upper
bounds on $\mathcal{R}_\Lin(\delta)$. A possible approach
starts from the observation that
a solution for $\Delsarte{n}{d}{\ell}$ is also feasible
for linear codes. To this end we will seek modifications
of such solutions, as indicated above.

For $\DelsarteLin{n}{d}{\ell}$, the hierarchy for linear codes,
we introduced problem \ref{prob:dual_fesible_linear}.
It is based on the same methods we used for general codes.
Although we still do not know whether
good solutions for this problem will improve the bound,
we believe that
a better understanding of this problem,
and in particular of the operators $\{A^{u}\}_{0\neq u \in\cube{\ell}}$ (see \eqref{eq:Au_def}),
will resolve many of the remaining mysteries.

We also considered another objective function
for $\DelsarteLin{n}{d}{\ell}$, that bounds $A_{\Lin}(n,d)$
rather than $A_{\Lin}(n,d)^\ell$. This hierarchy has
the advantage that is it well 
defined when $\ell\to\infty$. The solutions we
construct for this problem match the first LP bound for
every $\ell$.

\bibliography{refs}
\bibliographystyle{IEEEtran}

\appendix

\section{Proofs}\label{section:proofs}

\begin{proof}[Proof of proposition \ref{prop:delsarte_dual}]
    This is a particular case of proposition
    \ref{prop:dual_ell_general}, with $\ell=1$.
\end{proof}

\begin{proof}[Proof of proposition \ref{prop:lambda_sufficient_conditions}]
    By construction, $g$ satisfies constraint
    \eqref{eq:delsarte_dual5}.
    For constraint \eqref{eq:delsarte_dual4}, 
    by the convolution theorem,
    \[
        \widehat{g}
        = \mathcal{F}[2(d-|x|)] * \widehat{\Lambda}
        * \widehat{\Lambda}
        = \left((A - (n-2d)I)\widehat{\Lambda}\right)
        * \widehat{\Lambda}
        \geq 2\ve \widehat{\Lambda}* \widehat{\Lambda}
        \geq 0
    \]
    where $A$ is the adjacency matrix of the Hamming cube, defined in \eqref{eq:A_def}.
    By the preceding equation,
    \[
        \widehat{g}(0) 
        \geq 2\ve (\widehat{\Lambda}* \widehat{\Lambda})(0)
        = 2\ve \lVert \widehat{\Lambda} \rVert_2^2
        > 2\ve \widehat{\Lambda}^2(0) > 0
    \]
    hence $g$ satisfies \eqref{eq:delsarte_dual3}. Finally, let us bound the value of $g$:
    \begin{equation}
        \label{eq:bound_lambda_support}
        \frac{g(0)}{\widehat{g}(0)}
        \leq \frac{2d \Lambda^2(0) }{2\ve (\widehat{\Lambda}* \widehat{\Lambda})(0)}
        = \frac{d}{\ve} \frac{\lVert\widehat{\Lambda} \rVert_1^2}{\lVert \widehat{\Lambda} \rVert_{2}^2}
    \end{equation}
    Note that $\lVert\widehat{\Lambda} \rVert_1
    = \langle \widehat{\Lambda}, \1_{supp(\widehat{\Lambda})}
    \rangle$, and apply Cauchy-Schwartz inequality to complete the proof.
\end{proof}

\begin{proof}[Proof of proposition \ref{prop:lambda_existence}]
    \hfill

    We will use the following facts. References can be found,
    e.g., in \cite{mceliece1977new}.
    \begin{fact}
    The roots of the Krawtchouks all lie in $(0,n)$.
    Denote by $z_{j,i}$ the $j$-th root of $K_i$.
    The roots of $K_i$ and $K_{i+1}$
    interlace: $z_{j,i+1}<z_{j,i}<z_{j+1,i+1}$
    for $i=1,\dots,n-1$ and $1\leq j \leq i-1$. 
    $K_i$ is strictly positive in $[0,z_{1,i})$.
    \end{fact}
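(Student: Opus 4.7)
The Fact bundles three classical properties of Krawtchouk polynomials viewed as discrete orthogonal polynomials on $\{0,1,\dots,n\}$, and my plan is to prove all three by a single induction on $i$ driven by the standard three-term recurrence.

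The first step is to derive that recurrence. By the general theory of orthogonal polynomials applied to the binomial measure $\mu(j)=2^{-n}\binom{n}{j}$, there exist constants $a_i,b_i,c_i$ such that $K_{i+1}(x)=(a_i-b_i x)K_i(x)-c_i K_{i-1}(x)$. Pinning down the coefficients by evaluating at $x=0$ and matching leading terms yields
\[
(i+1)\,K_{i+1}(x) \;=\; (n-2x)\,K_i(x) \,-\, (n-i+1)\,K_{i-1}(x),
\]
which I would verify once against $K_0\equiv 1$, $K_1(x)=n-2x$, and $K_i(0)=\binom{n}{i}$.

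With the recurrence in hand, I would prove the root-location and interlacing claims simultaneously by induction on $i$. The base case $i=1$ is immediate, since $K_1$ has its single root at $n/2\in(0,n)$. For the inductive step, assume that $K_{i-1}$ and $K_i$ have all roots simple and in $(0,n)$, and that they interlace. Evaluating the recurrence at a root $z_{j,i}$ of $K_i$ gives
\[
(i+1)\,K_{i+1}(z_{j,i}) \;=\; -(n-i+1)\,K_{i-1}(z_{j,i}),
\]
so the signs of $K_{i+1}$ at successive roots of $K_i$ match (up to the fixed negative factor) those of $K_{i-1}$, which alternate by the interlacing hypothesis. This produces $i-1$ sign changes, and hence $i-1$ roots, of $K_{i+1}$ inside $(z_{1,i},z_{i,i})$, one in each gap. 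For the two remaining roots I would compare the sign of $K_{i+1}$ at $0$ (equal to $\binom{n}{i+1}>0$) with its sign at $z_{1,i}$, and its sign at $z_{i,i}$ with its sign at $n$ (equal to $(-1)^{i+1}\binom{n}{i+1}$ using the leading coefficient $(-2)^{i+1}/(i+1)!$); the interlacing hypothesis fixes the signs of $K_{i-1}$ at $z_{1,i}$ and $z_{i,i}$, hence those of $K_{i+1}$ via the displayed identity, and the intermediate value theorem then supplies one more root in each of $(0,z_{1,i})$ and $(z_{i,i},n)$. Since $K_{i+1}$ has degree $i+1$, the $i+1$ roots so produced are exactly all of its roots, all simple, all in $(0,n)$, and interlaced with those of $K_i$.

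The positivity claim $K_i>0$ on $[0,z_{1,i})$ is then an immediate corollary: $K_i(0)=\binom{n}{i}>0$ and $z_{1,i}$ is by definition the smallest root of $K_i$, so $K_i$ does not change sign on $[0,z_{1,i})$ and must be strictly positive there. The step I expect to require the most care is the sign bookkeeping at the two outer intervals $(0,z_{1,i})$ and $(z_{i,i},n)$, where one must correctly propagate the parity $(-1)^i$ in the leading coefficient of $K_i$ through the recurrence; this is the only place where an index-parity slip could quietly invalidate the induction.
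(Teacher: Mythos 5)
Your proof is essentially correct, but note that the paper does not prove this Fact at all: it is stated as known background with a pointer to the literature (the preamble to the three Facts says references can be found in the MRRW paper). So you are supplying a self-contained argument where the paper supplies a citation. Your argument is the classical Sturm-type interlacing proof for orthogonal polynomials, specialized to the Krawtchouk three-term recurrence $(i+1)K_{i+1}(x)=(n-2x)K_i(x)-(n-i+1)K_{i-1}(x)$, and the structure is sound: alternation of the signs of $K_{i-1}$ at the roots of $K_i$ (which follows from the inductive interlacing hypothesis and simplicity) transfers to $K_{i+1}$ via the recurrence, yielding $i-1$ interior roots, and the endpoint sign comparisons supply the two outer roots; a degree count then forces all roots to be simple and accounted for. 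One small justification should be repaired: the value $K_{i+1}(n)=(-1)^{i+1}\binom{n}{i+1}$ does not follow from the leading coefficient alone (that only controls the sign as $x\to+\infty$, which is not yet known to coincide with the sign at $x=n$ before you have located all the roots). Use instead the reflection symmetry $K_k(n-x)=(-1)^kK_k(x)$, or prove $K_k(n)=(-1)^k\binom{n}{k}$ directly by induction from the recurrence evaluated at $x=n$. With that fix the sign bookkeeping at both ends goes through, and the positivity of $K_i$ on $[0,z_{1,i})$ is, as you say, immediate from $K_i(0)=\binom{n}{i}>0$. (Incidentally, your argument gives interlacing for the full range $1\le j\le i$ of indices, which is slightly stronger than the range stated in the Fact.)
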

    \begin{fact}
        \label{fact:krawtchouk_zero_asymptotic}
        For $n$ large enough,
        \[
            z_{1,k} = n/2 - \sqrt{k(n-k)} + o(n) 
        \]
    \end{fact}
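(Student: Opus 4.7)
The plan is to reformulate the fact as a spectral problem on a Jacobi matrix and control the relevant eigenvalue via a sinusoidal test vector. I would start from the generating function $\sum_k K_k(x) z^k = (1-z)^x(1+z)^{n-x}$, whose derivative in $z$ yields the three-term recurrence
\[
(n - 2x) K_k(x) = (k+1) K_{k+1}(x) + (n - k + 1) K_{k-1}(x).
\]
Writing $y = n - 2x$ and passing to the orthonormal polynomials $\tilde K_k = K_k/\sqrt{\binom{n}{k}}$ symmetrizes the recurrence, and standard orthogonal-polynomial theory then identifies the roots of $K_k$ in $x$, via $x = (n-y)/2$, with the eigenvalues of the $k \times k$ symmetric tridiagonal matrix $J_k$ whose diagonal is zero and whose off-diagonal entries are $a_j := \sqrt{j(n-j+1)}$ for $j = 1, \dots, k-1$. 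Hence $z_{1,k} = (n - y_{\max}(J_k))/2$, and by the $y \mapsto -y$ symmetry of the Krawtchouk spectrum we may assume $k \leq n/2$. Assuming $k = \delta n$ with $\delta \in (0, 1/2]$ fixed, the task reduces to showing $y_{\max}(J_k) = 2\sqrt{k(n-k)} + o(n)$.

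The upper bound $y_{\max}(J_k) \leq 2\sqrt{k(n-k)} + O(1)$ is immediate from Gershgorin's theorem together with the observation that $a_j$ is increasing on $[1, (n+1)/2]$, so for $k \leq n/2$ the maximum over $j \in [1, k-1]$ is $a_{k-1} = \sqrt{(k-1)(n-k+2)} = \sqrt{k(n-k)} + O(1)$.

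For the matching lower bound I would choose a window $W = \{k-w, \dots, k-1\}$ of width $w$ with $w \to \infty$ and $w = o(n)$ (e.g., $w = \lfloor \sqrt n \rfloor$) and take the sinusoidal test vector $v_j = \sin\bigl(\pi(j - k + w + 1)/(w+1)\bigr)$ for $j \in W$ and $v_j = 0$ otherwise. Since $a'(j) = (n - 2j + 1)/(2\sqrt{j(n-j+1)}) = O(1)$ on $W$ (using $\delta > 0$), the entries satisfy $a_j = \sqrt{k(n-k)} + O(w)$ uniformly for $j \in W$. Standard Chebyshev identities give $\sum_j v_j v_{j+1} = \bigl((w+1)/2\bigr)\cos(\pi/(w+1))$ and $\|v\|^2 = (w+1)/2$, so a direct estimate of the Rayleigh quotient yields
\[
\frac{\langle v, J_k v\rangle}{\|v\|^2}
\geq 2\sqrt{k(n-k)} \cos(\pi/(w+1)) - O(w)
\geq 2\sqrt{k(n-k)} - O(n/w^2) - O(w).
\]
With $w = \lfloor \sqrt n \rfloor$ both error terms are $O(\sqrt n) = o(n)$, so the min-max principle gives $y_{\max}(J_k) \geq 2\sqrt{k(n-k)} - o(n)$, completing the argument.

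The hard part will be the balance of error terms in the lower bound: the Chebyshev defect $O(n/w^2)$ forces $w \to \infty$, while the variation of the $a_j$'s across $W$ contributes $O(w)$ to the Rayleigh quotient and forces $w = o(n)$. The main technical care goes into the perturbation bookkeeping (bounding $|a_j - \sqrt{k(n-k)}|$ uniformly on $W$) and in verifying the exact Chebyshev identities for the sinusoidal test vector.
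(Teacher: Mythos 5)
The paper never proves this Fact: it is quoted as known, with a pointer to the references in \cite{mceliece1977new} (where it traces back to Szeg\H{o}-type estimates on extreme zeros of orthogonal polynomials). So your proposal is not an alternative to an argument in the paper; it supplies one. The route you take --- symmetrize the three-term recurrence $(n-2x)K_k=(k+1)K_{k+1}+(n-k+1)K_{k-1}$, identify the zeros of $K_k$ (in the variable $y=n-2x$) with the spectrum of the $k\times k$ Jacobi matrix with zero diagonal and off-diagonals $a_j=\sqrt{j(n-j+1)}$, bound $y_{\max}$ above by Gershgorin and below by a sine test vector on a window of width $w$ near the top index --- is the standard self-contained proof, and your bookkeeping checks out: for $k=\delta n$, $\delta\in(0,1/2]$ fixed, the entries vary by $O(w)$ across the window since $|a'(j)|=O(1)$ there, the Chebyshev defect costs $O(\sqrt{k(n-k)}/w^2)=O(n/w^2)$, and $w=\lfloor\sqrt n\rfloor$ makes both errors $O(\sqrt n)=o(n)$. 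The Chebyshev identities you invoke are exactly the eigenvector relation for the path graph, so they are correct. What this buys over the paper is a complete elementary proof in place of a citation, in a form (extremal eigenvalue of a tridiagonal operator) that meshes naturally with the spectral viewpoint the paper itself adopts for $\Lambda$.

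Two caveats. First, the sentence ``by the $y\mapsto-y$ symmetry of the Krawtchouk spectrum we may assume $k\le n/2$'' is wrong: that symmetry is $K_k(n-x)=(-1)^kK_k(x)$, which pairs the roots of a \emph{single} $K_k$ about $n/2$; it does not relate $K_k$ to $K_{n-k}$. Indeed the asserted asymptotic fails for $k$ near $n$ (e.g.\ $K_n$ interpolates $(-1)^x$ on $\{0,\dots,n\}$, so $z_{1,n}\in(0,1)$, far from $n/2-\sqrt{n\cdot 0}$). You should simply state and prove the Fact for $k\le n/2$, which is all the paper uses ($k=r$ with $r/n\to 1/2-\sqrt{\delta(1-\delta)}<1/2$). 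Second, your argument assumes $k=\delta n$ with $\delta>0$ fixed; add the one-line remark that for $k=o(n)$ the Gershgorin half alone gives $0\le y_{\max}\le 2a_{k-1}=O(\sqrt{kn})=o(n)$, hence $z_{1,k}=n/2+o(n)=n/2-\sqrt{k(n-k)}+o(n)$, so that regime is covered too. With these repairs the proof is complete.
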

    \begin{fact}[Chritoffel Darboux formula]
    Let $0\leq j \leq n$ and define
    \begin{equation}
        \Lambda_j(t,s) \coloneqq \sum_{i=0}^{j}\binom{n}{i}^{-1} K_i(t)K_i(s)
        \label{eq:CD_eigenfuncs}
    \end{equation}
    for every $t,s\in \R$.
    Then,
    \begin{equation}
        \left[K_1(s) - K_1(t)\right] \Lambda_j(t,s) = 
        \frac{j+1}{\binom{n}{j}}\left[ K_{j+1}(s)K_{j}(t) - K_{j}(s) K_{j+1}(t) \right]
        \label{eq:CD}
    \end{equation}
    \end{fact}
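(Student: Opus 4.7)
The plan is to derive the Christoffel--Darboux identity directly from the three-term recurrence satisfied by Krawtchouk polynomials. With the normalization $K_i(0)=\binom{n}{i}$, the standard recurrence reads
\[
K_1(x)\,K_i(x) \;=\; (i+1)\,K_{i+1}(x) + (n-i+1)\,K_{i-1}(x),
\]
using the convention $K_{-1}\equiv 0$. I would take this as a known fact about Krawtchouks (it is classical, and can be checked at $x=0$ with a short factorial computation and then extended by verifying both sides are polynomials of the same degree matching on enough points, or by appealing to a standard reference).

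The next step is the familiar symmetrization trick. I would write the recurrence at both arguments $s$ and $t$, multiply the $s$-version by $K_i(t)$ and the $t$-version by $K_i(s)$, and subtract. This produces
\[
(K_1(s)-K_1(t))\,K_i(s)K_i(t) \;=\; (i+1)\bigl[K_{i+1}(s)K_i(t)-K_{i+1}(t)K_i(s)\bigr] - (n-i+1)\bigl[K_i(s)K_{i-1}(t)-K_i(t)K_{i-1}(s)\bigr].
\]
Then I would divide both sides by $\binom{n}{i}$ and look for a telescoping structure. The key algebraic identity, which is an immediate factorial check, is
\[
\frac{n-i+1}{\binom{n}{i}} \;=\; \frac{i}{\binom{n}{i-1}}.
\]
Setting
\[
T_i \;\coloneqq\; \frac{i+1}{\binom{n}{i}}\bigl[K_{i+1}(s)K_i(t)-K_{i+1}(t)K_i(s)\bigr],
\]
the divided identity becomes exactly $\displaystyle \frac{(K_1(s)-K_1(t))\,K_i(s)K_i(t)}{\binom{n}{i}} = T_i - T_{i-1}$.

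Finally, I would sum over $i=0,\dots,j$. By the definition of $\Lambda_j(t,s)$ in \eqref{eq:CD_eigenfuncs}, the left-hand side collapses to $(K_1(s)-K_1(t))\,\Lambda_j(t,s)$. The right-hand side telescopes to $T_j - T_{-1}$, and $T_{-1}=0$ because $K_{-1}\equiv 0$. This leaves $T_j$, which is exactly the right-hand side of \eqref{eq:CD}. The whole argument is a mechanical rewriting of the three-term recurrence, and the only delicate point is spotting the identity $\frac{n-i+1}{\binom{n}{i}}=\frac{i}{\binom{n}{i-1}}$ that drives the telescoping; once noticed, the derivation is a few lines. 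I do not anticipate a genuine obstacle here beyond bookkeeping signs and boundary terms at $i=0$.
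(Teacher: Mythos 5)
Your derivation is correct: the three-term recurrence $K_1(x)K_i(x)=(i+1)K_{i+1}(x)+(n-i+1)K_{i-1}(x)$ holds for this normalization (it checks at $x=0$ since $(i+1)\binom{n}{i+1}+(n-i+1)\binom{n}{i-1}=n\binom{n}{i}$), the identity $\frac{n-i+1}{\binom{n}{i}}=\frac{i}{\binom{n}{i-1}}$ is right, and the telescoping sum with $T_{-1}=0$ yields exactly \eqref{eq:CD}. The paper states this as a classical fact and cites references rather than proving it, so there is no in-paper argument to compare against; your proof is the standard Christoffel--Darboux derivation and is complete.
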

    Let us define $\Lambda$.
    
    Let $r\in \N$ be smallest such that 
    $z_{1,r} \geq d-\ve $,
    where $z_{1,r}$ is the first root of the $r$-th Krawtchouk, $K_r$.
    This implies $d-\ve \in [z_{1,r+1}, z_{1,r}]$.
    
    Define 
    \begin{equation}
        \Lambda(x)
        = \Lambda_{d,\ve}(x)\coloneqq \Lambda_{r}(x,d-\ve)
        = \sum_{i=0}^{r}\binom{n}{i}^{-1} K_{i}(d-\ve) K_{i}(x)
        \label{eq:Lambda}
    \end{equation}
    where $\Lambda_r(x,d-\ve)$ was defined in \eqref{eq:CD_eigenfuncs}.
    
    The Fourier transform of $\Lambda$ is
    \begin{equation}
        \label{eq:Lambda_hat}
        \widehat{\Lambda}(x)
        = \sum_{i=0}^{r} \binom{n}{i}^{-1} K_{i}(d-\ve) L_{i}(x)
    \end{equation}
    because $\widehat{K}_i = L_i$, which is the indicator of 
    the set $\{x\in\cube{n}: |x|=i\}$.
    
    Let us show that $\Lambda$ satisfies proposition
    \ref{prop:lambda_sufficient_conditions}.
    \begin{enumerate}[label=(\alph*)]
        \item By \eqref{eq:Lambda_hat}, $\widehat{\Lambda}(0) = 1$.
        \item Recall that $K_i$ is positive in the segment $[0,z_{1,i})$;
        that $z_{1,i} > z_{1,r}$ for all $i<r$;
        and we chose $r$ so that $z_{1,r}\geq d-\ve$, whence $K_{i}(d-\ve)  \geq 0$ for
        every $0\leq i \leq r$.
        
        Therefore, $\widehat{\Lambda} \geq 0$.
        
        \item The degree-1 Krawtchouk is $K_1(t)=n-2t$. We rearrange $2(d-|x|)$ by adding and subtracting $2\ve$ and
        writing it using $K_1$.
        \[
            2(d-|x|) = 2\ve + K_1(x) - K_1(d-\ve)
        \]
        Then apply
        the Christoffel-Darboux
        formula \eqref{eq:CD}:
        \[
            \mathcal{F}(2(d-|x|) \cdot \Lambda)(x)
            = 2\ve \widehat{\Lambda}(x)
            +\frac{r+1}{\binom{n}{r}} \left[
            K_{r}(d-\ve) L_{r+1}(x)
            - K_{r+1}(d-\ve) L_{r}(x)
            \right]
        \]
        The first term is non-negative by the previous item, and
        the rest is also non-negative by our choice of $r$. Therefore,
        \[
            2\widehat{(d-|x|)}*\widehat{\Lambda}
            = A \widehat{\Lambda} - (n-2d) \widehat{\Lambda}
            \geq 2\ve \widehat{\Lambda}
        \]
    \end{enumerate}
    
    Finally, note that $\widehat{\Lambda}$ is supported on the
    Hamming ball of radius $r$.
    \begin{remark}
    Our proof here is based on Krawtchouk theory and is close
    to \cite{mceliece1977new}. It works just as well with
    \[
        \Lambda_r(x,z_{1,r+1}) = \sum_{i=0}^{r}\binom{n}{i}^{-1} 
            K_i(z_{1,r+1}) K_i(x)
    \]
    which is the $\Lambda$ we
    described at the end of section \ref{section:first_lp_bound}
    The first zero of $K_{r+1}$
    is the spectral radius of $A^{\leq r}$,
    the adjacency matrix of the Hamming ball of radius $r$.
    \end{remark}
    
\end{proof}

\begin{proof}[Proof of corollary \ref{cor:first_lp_bound}]
    The cardinality of the Hamming ball of radius $r$
    is $2^{nH(r/n)+o(1)}$. 
    Choose $\ve$ not too small in proposition \ref{prop:lambda_sufficient_conditions}, e.g.\ $\ve=1$.
    By propositions
    \ref{prop:delsarte_dual}, \ref{prop:lambda_sufficient_conditions} and \ref{prop:lambda_existence},
    \[
        A(n,d) \leq 2^{nH(1/2-\sqrt{d/n(1-d/n)})+o(n)}
    \]
\end{proof}

\begin{proof}[Proof of proposition \ref{prop:dual_ell_general}]
    
    Let $g$ be a feasible solution to the LP in the proposition.
    Let $f$ be a feasible solution to $\Delsarte{n}{d}{\ell}$.
    
    \begin{align*}
        \widehat{g}(0) \sum_{X\in\cube{\ell\times n}} f(X)
        &= 2^{\ell n} \widehat{g}(0)\widehat{f}(0)
        \\
        &\leq 2^{\ell n} \sum_{X} \widehat{g}(X)\widehat{f}(X)
        \\
        &= 2^{\ell n} \langle \widehat{g}, \widehat{f} \rangle_{\mathcal{F}}
        \\
        &= 2^{\ell n} \langle g, f \rangle
        \\
        &= \sum_{X} g(X)f(X)
        \\
        &\leq g(0)
    \end{align*}
    The first transition if by definition. The second is 
    because $\widehat{f}\geq 0$ and $\widehat{g}\geq 0$.
    The fourth is by Parseval's identity.
    The last transition is because, for each $0\neq X\in\cube{\ell\times n}$,
    if $X\in \Forb{n}{d}{\ell}$ then $f(X) = 0$,
    otherwise $g(X)\leq 0$ and $f(X) \geq 0$.
    
    Finally, we use the fact that $A(n,d)^\ell \leq \sum_X f(X)$.
\end{proof}

\begin{proof}[Proof of proposition \ref{prop:nonpos_poly_general}]
    \hfill
    \begin{enumerate}
        \item $(n+d-2|x_i|)^m \geq 0$ for every $i$ because $m$ is even.
        If $x_i = 0$ for some $i\in U$ then
        \[
            \sum_{i\in U} (n+d+|x_i|)^m \geq (n+d)^m > \ell(n-d)^m \geq |U|(n-d)^m
        \]
        The second inequality follows from the constraint on $m$.
        \item Always $|x|\leq n$, so if $|x|\geq d$
        \[
            \big| n+d-2|x| \big| \leq n-d
        \]
        since $m$ is even, $(n+d-2|x_i|)^m \leq (n-d)^m$. Assuming $|x_i| \geq d$
        for all $i\in U$,
        \[
            \phi_U(x_1,\dots,x_r) \leq |U|\left[(n-d)^m -(n-d)^m\right] = 0
        \]
        \item Obvious.
        \item Let $0\neq X\notin \Valid{n}{d}{\ell}$.
        Let $V = \{1\leq i \leq \ell: |x_i|\geq d\}$. Then $x_i = 0$ if $i\notin V$.
        By item 1, $\phi_U(X) > 0$ for every $U\not\subset V$. By item 2,
        $\phi_U(X)\leq 0$ for every $U\subset V$. There are $2^{|V|}-1$ non-empty
        subsets of $V$. $\Phi(X)$ is a product of an odd number of non-positive functions,
        and some positive functions. Hence $\Phi(X) \leq 0$.
    \end{enumerate}
\end{proof}

\begin{proof}[Proof of proposition \ref{prop:dual_feasibl_sol_ell}]
\hfill
\begin{enumerate}
    \item By proposition \ref{prop:nonpos_poly_general},
    $g$ satisfies \eqref{eq:dual_ell6}.
    
    It remains to show that $\widehat{g} \geq 0$ and $\widehat{g}(0)>0$.
    
    In the previous section we saw that
    \[
        (A - (n-2d)I) \widehat{\Lambda}
        \geq 2\ve \widehat{\Lambda}
    \]
    which implies
    \[
        (A + dI) \widehat{\Lambda}
        \geq (n  - d + 2\ve) \widehat{\Lambda}
    \]
    Repeated application of the operator $A + dI$
    results in
    \[
        (A + dI)^m \widehat{\Lambda}
        \geq (n  - d + 2\ve)^m \widehat{\Lambda}
    \]
    Let $i \in [\ell]$. 
    The function $(n+d-2|x_i|)^m$ can be expressed as
    \[
        (K_1(x_i) + d)^m \cdot \prod_{j\in[\ell], j\neq i} K_0(x_j)
    \]
    because $K_0\equiv 1$.
    The Fourier transform of $K_0$ is $L_0$, and convolution with $L_0$ corresponds to the identity matrix $I$.
    Thus, convolution with $\mathcal{F}[(n+d-2|x_i|)^m]$ corresponds to
    the matrix
    \[
        I \otimes \dots \otimes I \otimes \underbrace{(A + d)^m}_{i\text{-th coordinate}}
        \otimes I \otimes \dots \otimes I
    \]
    Namely, the
    convolution operator of
    $\mathcal{F}[(n+d-2|x_i|)^m]$ interacts only with the
    $i$-th coordinate in $\left(\cube{n}\right)^{\ell}$, hence
    \[
        \mathcal{F}[(n+d-2|x_i|)^m] * \widehat{\Lambda}^{\otimes \ell} \geq (n-d+2\ve)^m \widehat{\Lambda}^{\otimes \ell}
    \]
    By linearity of the convolution operation, and 
    by our choice of $\ve$,
    \[
        \widehat{\phi}_{U} * \widehat{\Lambda}^{\otimes \ell}
        \geq |U| \big(
        (n-d+2\ve)^m
        - (n-d)^m \big) \widehat{\Lambda}^{\otimes \ell}
        = |U| \widehat{\Lambda}^{\otimes \ell}
    \]
    for every $\emptyset\neq U \subset [\ell]$.
    Thus,
    \[
        \widehat{\Phi} * \widehat{\Lambda}^{\otimes \ell}
        \geq \left( \prod_{j=1}^{\ell} j^{\binom{\ell}{j}} \right)
        \widehat{\Lambda}^{\otimes \ell}
    \]
    This implies that $\widehat{g}\geq 0$ and $\widehat{g}(0) >0$,
    namely $g$ is feasible.
    
    \item Let us compute the value of $g$. Using similar reasoning
    as in \eqref{eq:bound_lambda_support},
    \[
        \frac{g(0)}{\widehat{g}(0)}
        \leq \frac{\Phi(0)}{\prod_{j=1}^{\ell}j^{\binom{\ell}{j}}}
        \left| supp(\widehat{\Lambda}^{\otimes \ell}) \right|
        = \frac{\Phi(0)}{\prod_{j=1}^{\ell}j^{\binom{\ell}{j}}}
        \left| supp(\widehat{\Lambda}) \right|^{\ell}
    \]
    We can pick $m\geq \frac{1}{\delta} \log \ell$. Then,
    \begin{align*}
        \Phi(0) 
        &\leq \prod_{j=1}^{\ell}
        \left(j n^m ((1+\delta)^m
        -(1-\delta)^{m})\right)^{\binom{\ell}{j}} 
        \\
        &\leq \left(\prod_{j=1}^{\ell}j^{\binom{\ell}{j}}\right)
        \left(
        n^m e^{\delta m}
        \right)^{\sum_{j=1}^{\ell}\binom{\ell}{j}}
        \\
        &\leq \left(\prod_{j=1}^{\ell}j^{\binom{\ell}{j}}\right)
        n^{\frac{2^{\ell}\log \ell}{\delta}} \ell^{2^{\ell-1}}
    \end{align*}
    hence
    \[
        \frac{g(0)}{\widehat{g}(0)} \leq 
        \left(
        en^{1/\delta}
        \right)^{2^{\ell} \log \ell}
        \left| supp(\widehat{\Lambda}) \right|^{\ell}
    \]
\end{enumerate}
\end{proof}

\begin{proof}[Proof of corollary \ref{cor:dual_ell}]
    By propositions \ref{prop:dual_ell_general}
    and \ref{prop:dual_feasibl_sol_ell},
    \[
        A(n,d)\leq 
        \left(
        en^{1/\delta}
        \right)^{\frac{2^{\ell} \log \ell}{\ell}}
        \left| supp(\widehat{\Lambda}) \right|
    \]
    
    The value of $|supp(\hat{\Lambda})|$ is equivalent to the 
    first LP bound, by corollary \ref{cor:first_lp_bound}.
    Therefore, the bound we obtained is as long as
    if 
    \[
        \frac{1}{n}\log_2\left( 
        \left(
        en^{1/\delta}
        \right)^{\frac{2^{\ell} \log \ell}{\ell}}
        \right) = o_n(1)
    \]
    which is true when $\ell \leq \log n - \log\log n$.
\end{proof}

\begin{proof}[Proof of proposition \ref{prop:dual_ell_linear}]
    The proof is similar to that of 
    proposition \ref{prop:dual_ell_general}.
\end{proof}

\begin{proof}[Proof of proposition \ref{prop:poly_linear}]
    \hfill
    \begin{enumerate}
        \item From the first item of proposition \ref{prop:nonpos_poly_general} and by the choice of $m$,
        \[
            \phi_v(X) \geq (n+d)^m - 2^{\ell-1}(n-d)^m >0
        \]
        \item From the second item of proposition \ref{prop:nonpos_poly_general}, and since
        for every $v$ the number of $u$ for which $\chi_v(u)=-1$ is $2^{\ell-1}$,
        \[
            \phi_v(X) \leq 2^{\ell-1}(n-d)^m - 2^{\ell-1}(n-d)^m =0
        \]
        \item Obvious.
        \item Let $X\in \ValidLin{n}{d}{\ell}$, $X\neq 0$.
        Let $V = \{u: |u^T X| = 0\}$. Observe that $V$ is a 
        linear subspace.
        Let $v \in V^{\perp}\setminus\{0\}$. 
        By item 2, $\phi_v(X) \leq 0$. On the other hand, 
        if $v \in \cube{\ell}\setminus V^{\perp}$, by item 1 
        $\phi_v(X) > 0$. So $\Phi^{\Lin}(X)$ is a product of 
        $2^{\dim V}-1$
        non-positive functions, and $2^\ell - 2^{\dim V}$ 
        positive functions,
        hence $\Phi^{\Lin}(X) \leq 0$.
    \end{enumerate}
\end{proof}

\begin{proof}[Proof of proposition \ref{prop:gamma_feasible_linear_ell}]
    Using a similar reasoning to the proof of 
    propositions \ref{prop:lambda_sufficient_conditions} and \ref{prop:dual_feasibl_sol_ell}, it is not hard to see
    that $g\coloneqq \Phi^\Lin \Lambda^2$ is a feasible solution
    to $\DelsarteLin{n}{d}{\ell}$, with value
    \[
        \frac{g(0)}{\widehat{g}(0)}
        \leq 
        \frac{\Phi^\Lin(0)}{2^{(\ell-1)(2^{\ell}-1)}}
        \left| supp(\widehat{\Lambda})\right|
    \]
    Also, choosing $m \geq \ell/\delta$,
    \begin{align*}
        \Phi^\Lin(0)
        &\leq \left[ 2^{\ell-1} \left(
        (n+d)^m-(n-d)^m
        \right)\right]^{2^{\ell}-1}
        \\
        &\leq 2^{(\ell-1)2^{\ell-1}}
        (e^\delta n)^{m 2^{\ell}}
        \\
        & \leq 2^{(\ell-1)(2^{\ell}-1)}
        \left(en^{1/\delta}\right)^{\ell 2^{\ell}}
    \end{align*}
    Finally, recall that 
    $A_\Lin(n,d) \leq \left(g(0)/\widehat{g}(0)
    \right)^{1/\ell}$.
\end{proof}

\begin{proof}[Proof of proposition \ref{prop:dual_linear}]
    Let $f:\cube{\ell\times n} \to\R$ be a feasible 
    solution to $\DelsarteLin{n}{d}{\ell}$.
    Let $g:\cube{\ell\times n} \to\R$ be a feasible 
    solution to the prgoram in the proposition.
    \begin{align*}
        A(n,d)
        &\leq \frac{1}{2^\ell-1} 
        \sum_{0\neq u \in \cube{\ell}} 
        \sum_{x\in\cube{n}} f(ux^\top)
        \\
        &\overset{(1)}{=} 1 + \frac{1}{2^\ell-1} 
        \sum_{0\neq u \in \cube{\ell}} 
        \sum_{0\neq x\in\cube{n}} f(ux^\top)
        \\
        &\overset{(2)}{\leq} 1 + 
        \sum_{0\neq u \in \cube{\ell}} 
        \sum_{0\neq x\in\cube{n}} f(ux^\top)(1 - g(ux^\top))
        \\
        &\overset{(3)}{\leq} 1 + 
        \sum_{0 \neq X\in\cube{\ell\times n}} 
        f(X)(1 - g(X))
        \\
        &\overset{(1)}{\leq} 1 + 
        \sum_{X\in\cube{\ell\times n}} 
        (f(X) -\delta_0(X))(1 - g(X))
        \\
        &\overset{(4)}{=} 1 + 2^{\ell n} \langle \widehat{f} - \widehat{\delta},
        \widehat{\chi}_0 - \widehat{g}\rangle_{\mathcal{F}}
        \\
        &= 1 + 2^{\ell n} \langle \widehat{f} 
        - 2^{-rn}\chi_0,
        \delta_0 - \widehat{g}\rangle_{\mathcal{F}}
        \\
        &= 1 + g(0) -\widehat{g}(0)
        -  2^{\ell n} \langle \widehat{f},
        \widehat{g} - \delta_0 \rangle_{\mathcal{F}}
        \\
        &\overset{(5)}{\leq} g(0)
    \end{align*}
    \begin{enumerate}[label=(\arabic*)]
        \item $f(0) = 1$.
        \item For $u,x\neq 0$, if $|x|\leq d-1$ then $f(ux^\top) = 0$,
        otherwise $\frac{1}{2^\ell-1} \leq 1 - g(ux^\top)$.
        \item For $X\neq 0$ with rank $\geq 2$, if $X \in \ForbLin{n}{d}{\ell}$ then $f(X)=0$, otherwise $f(X) \geq 0$ 
        and $1-g(X)\geq 0$.
        \item Parseval's identity.
        \item $\widehat{g}(0)=1$, $\widehat{f}\geq 0$, 
        $\widehat{g} - \delta_0 \geq 0$.
    \end{enumerate}
\end{proof}

\begin{proof}[Proof of proposition \ref{prop:g_hat_linear}]
Rewrite $g$ in a more convenient way:
\[
    g(X) = 1
    + (g_1(0)-1)\delta_{0}(X)
    + \frac{1}{2^\ell-1}(g_1(x)-1) \1_{[X = ux^\top]}
\]
where $\1_{[X = ux^\top]}$ is the indicator function of the set
\[
    \{X\in\cube{\ell\times n}: X = ux^\top
    \text{ for some } 0\neq u\in\cube{\ell}, 0\neq x\in \cube{n} \}
\]
The constant function $1$ is the Fourier character that corresponds
to the zero vector, $\chi_{0}$. Its 
Fourier transform is Kronecker's delta function at $0$, $\delta_{0}(X)$.
\begin{align*}
    \widehat{g}(X)
    &= \delta_{0}(X)
        + 2^{-\ell n} \left[ (g_1(0)-1)\delta_{0}(X)
        + \frac{1}{2^\ell -1} \sum_{u\neq 0} 
        \sum_{y\neq 0} \chi_{X}(uy^\top) 
        \left[ g_1(y)-1\right]\right]
    \\
    &= \delta_{0}(X)
        + \frac{2^{-\ell n} }{2^\ell -1}
            \sum_{u\neq 0} \sum_{y} \chi_{X}(uy^\top)
            \left[ g_1(y)-1\right]
\end{align*}
It is not hard to verify that $\langle X, uy^\top \rangle = 
\langle u^\top X, y \rangle$, hence $\chi_X(uy^\top)=\chi_{u^\top X}(y)$.
Thus, the inner sum over $y$ is the projection of the functions
$g_1$ and $1$ over the Fourier character $\chi_{u^\top X}$, up to
normalization by $2^{-n}$.
 
\end{proof}

\end{document}